\title{Personalized Privacy Amplification via Importance Sampling}
\author{\name Dominik Fay \email dominikf@kth.se \\
      \addr Elekta and KTH Royal Institute of Technology, Stockholm, Sweden
      \AND
      \name Sebastian Mair \email sebastian.mair@liu.se \\
      \addr Linköping University and Uppsala University, Sweden
      \AND
      \name Jens Sjölund \email jens.sjolund@it.uu.se \\
      \addr Uppsala University, Sweden
}
\let\originalleft\left
\let\originalright\right
\renewcommand{\left}{\mathopen{}\mathclose\bgroup\originalleft}
\renewcommand{\right}{\aftergroup\egroup\originalright}
\newcommand{\Ac}{\mathcal{A}}
\newcommand{\Bc}{\mathcal{B}}
\newcommand{\betac}{\beta_\mathrm{count}}
\newcommand{\betas}{\beta_\mathrm{sum}}
\newcommand{\cb}{\mathbf{c}}
\newcommand{\Cc}{\mathcal{C}}
\newcommand{\Dc}{\mathcal{D}}
\newcommand{\E}{\mathbb{E}}
\newcommand{\Ic}{\mathcal{I}}
\newcommand{\M}{\mathcal{M}}
\newcommand{\N}{\mathbb{N}}
\newcommand{\R}{\mathbb{R}}
\newcommand{\Sc}{\mathcal{S}}
\newcommand{\x}{\mathbf{x}}
\newcommand{\xb}{\bar{\mathbf{x}}}
\newcommand{\Xc}{\mathcal{X}}
\newcommand{\Yc}{\mathcal{Y}}
\newcommand{\Zc}{\mathcal{Z}}
\newcommand{\Samp}{S}
\DeclareMathOperator*{\argmax}{arg\,max}
\DeclareMathOperator*{\argmin}{arg\,min}
\DeclareMathOperator*{\minimize}{minimize}
\DeclareMathOperator*{\maximize}{maximize}
\newcommand*{\transpose}{{\mkern-1.5mu\mathsf{T}}}
\newcommand{\ie}{\textit{i.e.}}
\newcommand{\eg}{\textit{e.g.}}
\newcommand{\myparagraph}[1]{\textbf{#1}\ }
\theoremstyle{plain}
\newtheorem{theorem}{Theorem}
\newtheorem{proposition}[theorem]{Proposition}
\theoremstyle{definition}
\newtheorem{definition}[theorem]{Definition}
\theoremstyle{remark}
\begin{document}

\maketitle

\begin{abstract}
  For scalable machine learning on large data sets, subsampling a representative subset is a common approach for efficient model training.
  This is often achieved through importance sampling, whereby informative data points are sampled more frequently.
  In this paper, we examine the privacy properties of importance sampling, focusing on an individualized privacy analysis.
  We find that, in importance sampling, privacy is well aligned with utility but at odds with sample size.
  Based on this insight, we propose two approaches for constructing sampling distributions: one that optimizes the privacy-efficiency trade-off; and one based on a utility guarantee in the form of coresets.
  We evaluate both approaches empirically in terms of privacy, efficiency, and accuracy on the differentially private $k$-means problem.
  We observe that both approaches yield similar outcomes and consistently outperform uniform sampling across a wide range of data sets.
  Our code is available on GitHub.%
\footnote{\url{https://github.com/smair/personalized-privacy-amplification-via-importance-sampling}}
\end{abstract}

\section{Introduction}%
\label{sec:intro}
When deploying machine learning models in practice, two central challenges are scalability, \ie, the computationally efficient handling of large data sets, and the protection of user privacy.
A common approach to the former challenge is subsampling, \ie, performing demanding computations only on a subset of the data, see, \eg, \cite{alain2016variance,katharopoulos2018not}.
Here, importance sampling is a powerful tool that can reduce the variance of the subsampled estimator.
It assigns higher sampling probabilities to data points that are more informative for the task at hand while keeping the estimate unbiased.
For instance, importance sampling is commonly used to construct coresets, that is, subsets whose loss function value is provably close to the loss for the full data set, see, \eg, \citet{bachem2018scalable}.
For the latter challenge, differential privacy \citep{dwork2006a} offers a framework for publishing trained models in a way that respects the individual privacy of every user.
\looseness=-1

Differential privacy and subsampling are related via the concept of \emph{privacy amplification by subsampling} \citep{Kasiviswanathan2008,balle2018}, which states, loosely speaking, that subsampling with probability $p$ improves the privacy parameter of a subsequently run differentially private algorithm by a factor of approximately $p$.
A typical application of this result involves re-scaling the query by a factor of $1/p$ to eliminate the sampling bias, thereby approximately canceling out the privacy gains, but keeping the efficiency gain.
It forms the foundation for many practical applications of differential privacy, such as differentially private stochastic gradient descent \citep{bassily2014,abadi2016}.
\looseness=-1

So far, privacy amplification has been predominantly used with uniform sampling \citep{steinke2022composition}.
Although the potential of data-dependent sampling for reducing sampling variance is well understood (\eg, \citet{Robert2005}), it has largely remained untapped in differential privacy because its privacy benefits have not been as clear \citep{bun2022,drechsler2024}.
A longstanding objection to data-dependent sampling is that the privacy amplification factor scales with the maximum sampling probability when applied to heterogeneous probabilities, leading to worse privacy than uniform sampling when controlling for sample size.
Recently, \citet{bun2022} confirmed that this also holds for probability-proportional-to-size sampling -- a sampling strategy closely related to importance sampling -- and further noted that additional privacy leakage may arise from data points influencing other data points' sampling probabilities.
\looseness=-1

We find that these challenges can be addressed with an appropriate sampling scheme and an individualized privacy analysis incorporating more information than previous work.
We propose Poisson importance sampling, defined as sampling each data point independently with a probability that depends only on the data point itself and weighting the point by the reciprocal of the probability.
In this setting, we conduct a privacy analysis that explicitly considers the individual privacy loss of each data point when sampled with a specific weight and probability.
Our analysis reveals that, in importance sampling, privacy is typically not at odds with utility but with \emph{sample size}.
This is for two reasons: \textbf{(i)} the most informative points typically have the highest privacy loss, and \textbf{(ii)} when importance weights are accounted for, decreasing the sampling probability typically \emph{increases} the privacy loss.
Perhaps counterintuitively, we conclude that we should assign high sampling probabilities to data points with high privacy loss to have good privacy \emph{and} good utility.
\looseness=-1

At first, statement \textbf{(ii)} might seem to suggest that we should reject weighted sampling altogether because there is no privacy gain.
However, this is misleading because the statement applies even more to the widely used uniform sampling when weights are accounted for.
Indeed, we find that importance sampling is superior to uniform sampling in terms of mitigating the impact on privacy and utility.
By including sampling weights into the framework of privacy amplification, we make this effect explicit and highlight that the primary purpose of subsampling (whether uniform or data-dependent) is to improve the efficiency of the mechanism, not its privacy-utility trade-off.

\begin{figure}[t]
    \centering
    \includegraphics[width=1.\linewidth]{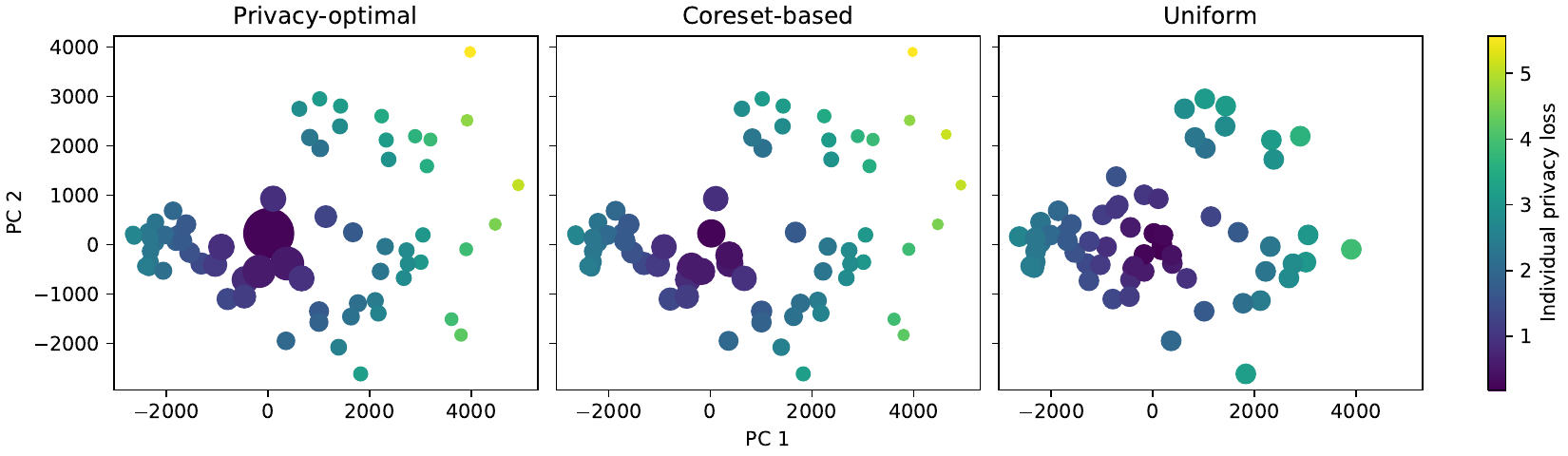}
    \caption{Illustration of three subsampling strategies for the learning task of $k$-means clustering on the \emph{Song} data set.
    We show a scatter plot of the first two principal components of the sampled points.
    The marker size is proportional to the importance weight, while the color represents the individual privacy loss before sampling.
    \textbf{Left}: Our privacy-constrained sampling selects data points with higher individual privacy loss more frequently and with lower weight.
    \textbf{Middle}: Coreset-based sampling selects data points based on their potential impact on the objective function.
    \textbf{Right}: Uniform sampling selects data points with equal probability.
    \looseness=-1
    }%
    \label{fig:teaser}
\end{figure}

Based on this insight, we derive two specific approaches for constructing importance sampling distributions for a given mechanism.
\begin{itemize}
    \item Our first approach navigates the aforementioned adverse relationship between privacy and sample size by minimizing the expected sample size subject to a worst-case privacy constraint.
    This approach effectively equalizes the individual privacy losses of the mechanism.
    We call this approach \emph{privacy-constrained} sampling.
    It applies to any differentially private mechanism whose individual privacy loss profile is known.
    We provide an efficient algorithm that optimizes the sampling probabilities numerically.
    \item Our second approach is based on the concept of coresets, which are small, representative subsets that come with strong utility guarantees in the form of a confidence interval around the loss function value of the full data set.
    We derive the privacy properties of a coreset-based sampling distribution when used in conjunction with differentially private $k$-means clustering.
\end{itemize}
The two approaches are visualized in \Cref{fig:teaser} alongside uniform sampling.
The figure shows that the privacy-constrained weights are highly correlated to the utility-based coreset weights, supporting the intuition that privacy and utility are well aligned in importance sampling.
\looseness=-1

We empirically evaluate the proposed approaches on the task of differentially private $k$-means clustering.
We compare them to uniform sampling in terms of efficiency, privacy, and accuracy on eight different data sets of varying sizes (cf.~\Cref{tab:data}).
We find that both of our approaches consistently provide better utility for a given sample size and privacy budget than uniform sampling on all data sets.
We find meaningful privacy-utility improvements even in the medium-to-low privacy regime where uniform sampling typically fails to do so.
One practical implication is that importance sampling can be used effectively to subsample the data set \emph{once} at the beginning of the computation, while uniform subsampling typically requires repeated subsampling at each iteration.

\section{Preliminaries}
We begin by stating the necessary concepts that also introduce the notation used in this paper. 
We denote by $\Xc \subseteq \R^d$ the set of all possible data points.
A data set $\Dc \in \Xc^*$ is a finite subset of $\Xc$.
We use $\Bc_{d, p} (r) = \{ \x \in \R^d \mid \|\x\|_p \leq r \}$ to refer to the $\ell_p$-norm ball of radius $r$ in $d$ dimensions.
When the dimension is clear from context, we omit the subscript $d$ and write $\Bc_p(r)$.

\myparagraph{Differential Privacy.}
Differential privacy (DP) is a formal notion of privacy stating that the output of a data processing method should be robust, in a probabilistic sense, to changes in the data set that affect individual data points.
It was introduced by \citet{dwork2006a} and has since become a standard tool for privacy-preserving data analysis \citep{ji2014}.
The notion of robustness is qualified by a parameter $\epsilon \geq 0$ that relates to the error rate of any adversary that tries to infer whether a particular data point is present in the data set \citep{kairouz2015,dong2019}.
Differential privacy is based on the formal notion of indistinguishability.
\begin{definition}[$\epsilon$-indistinguishability]%
\label{def:indistinguishability}
  Let $\epsilon \geq 0$.
  Two distributions $P, Q$ on a space $\Yc$ are $\epsilon$-\emph{indistinguishable} if $P(Y) \leq e^\epsilon Q(Y)$ and $Q(Y) \leq e^\epsilon P(Y)$ for all measurable $Y \subseteq \Yc$.
\end{definition}
\begin{definition}[$\epsilon$-DP]%
\label{def:eDP}
  A randomized mechanism $\M\colon \Xc^* \to \Yc$ is said to be $\epsilon$-\emph{differentially private} ($\epsilon$-DP) if, for all neighboring data sets $\Dc, \Dc' \in \Xc^*$, the distributions of $\M(\Dc)$ and $\M(\Dc')$ are $\epsilon$-indistinguishable.
  Two data sets $\Dc, \Dc'$ are neighboring if $\Dc = \Dc' \cup \{\x\}$ or $\Dc' = \Dc \cup \{\x\}$ for some $\x \in \Xc$.
\end{definition}
Formally, we consider a randomized mechanism as a mapping from a data set to a random variable.
Differential privacy satisfies several convenient analytical properties, including closedness under post-processing \citep{dwork2006a}, composition \citep{dwork2006a,dwork2006b,dwork2010,kairouz2015}, and sampling \citep{Kasiviswanathan2008,balle2018}.
The latter is called \emph{privacy amplification by subsampling}.
\begin{proposition}[Privacy Amplification by Subsampling]%
\label{prop:privacy-amplification}
  Let $\Dc$ be a data set of size $n$, $\Sc$ be a subset of $\Dc$ where every $\x$ has a constant probability $p$ of being independently sampled, \ie, $q(\x)=p\in(0,1]$, and $\M$ be an $\epsilon$-DP mechanism.
  Then, $\M(\Sc)$ satisfies $\epsilon'$-DP for $\epsilon' = \log\left( 1 + \left( \exp(\epsilon) - 1 \right) p \right)$.
\end{proposition}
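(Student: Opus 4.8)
The plan is to fix a neighboring pair and reduce to a single orientation. Since $(\epsilon,\delta)$-indistinguishability is symmetric in its two arguments, it suffices to take $\Dc' = \Dc \cup \{\x_0\}$ and establish both inequalities of \Cref{def:indistinguishability} for $P(Y) := \Pr[\M(\Sc) \in Y]$ (sampling from $\Dc$) and $Q(Y) := \Pr[\M(\Sc') \in Y]$ (sampling from $\Dc'$). The essential device is a coupling: since each point is included independently, I would use the \emph{same} inclusion decisions for the shared points of $\Dc$ in both experiments. Conditioning on the resulting random subset $T \subseteq \Dc$, the two runs then agree exactly, except that $\Sc'$ additionally contains $\x_0$ precisely when $\x_0$ is sampled, which happens with probability $p$. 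This yields the mixture decomposition $Q(Y) = (1-p)\,\E_T\!\left[\Pr[\M(T)\in Y]\right] + p\,\E_T\!\left[\Pr[\M(T\cup\{\x_0\})\in Y]\right]$, where the first expectation is exactly $P(Y)$. Writing $R(Y) := \E_T[\Pr[\M(T\cup\{\x_0\})\in Y]]$, we have $Q(Y) = (1-p)P(Y) + pR(Y)$.

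For the first direction, $Q(Y) \le e^{\epsilon'}P(Y) + \delta'$, I would apply the DP guarantee of $\M$ to each fixed $T$: since $T$ and $T\cup\{\x_0\}$ are neighboring, $\Pr[\M(T\cup\{\x_0\})\in Y] \le e^\epsilon \Pr[\M(T)\in Y] + \delta$, and taking expectations gives $R(Y) \le e^\epsilon P(Y) + \delta$. Substituting into the mixture yields $Q(Y) \le \left(1-p+pe^\epsilon\right)P(Y) + p\delta = \left(1 + p(e^\epsilon-1)\right)P(Y) + p\delta$, which is exactly $e^{\epsilon'}P(Y) + \delta'$ with the stated constants. This direction is essentially mechanical.

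The harder direction is $P(Y) \le e^{\epsilon'}Q(Y) + \delta'$, and I expect it to be the main obstacle. Applying DP the other way gives $P(Y) \le e^\epsilon R(Y) + \delta$, hence $R(Y) \ge e^{-\epsilon}(P(Y)-\delta)$; feeding this into the mixture produces $Q(Y) \ge \left(1 - p(1-e^{-\epsilon})\right)P(Y) - pe^{-\epsilon}\delta$. Setting $c := 1 - p(1-e^{-\epsilon})$ and rearranging gives $P(Y) \le \tfrac1c Q(Y) + \tfrac{pe^{-\epsilon}}{c}\delta$. It then remains to verify two scalar inequalities. The multiplicative one, $\tfrac1c \le e^{\epsilon'} = 1 + p(e^\epsilon-1)$, is the crux; I would prove it by the algebraic identity $\left(1 + p(e^\epsilon-1)\right)\left(1 - p(1-e^{-\epsilon})\right) = 1 + \tfrac{p(1-p)(e^\epsilon-1)^2}{e^\epsilon} \ge 1$, which uses $p \in (0,1]$. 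The additive one reduces to $\tfrac{e^{-\epsilon}}{c} \le 1$, i.e. $e^{-\epsilon} \le 1 - p(1-e^{-\epsilon})$, which follows from $p \le 1$ and $1 - e^{-\epsilon} \ge 0$. Together these give $P(Y) \le e^{\epsilon'}Q(Y) + p\delta$, completing the second direction with the same constants and hence the claim.
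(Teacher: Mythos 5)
Your proof is correct. Note that the paper does not prove Proposition~\ref{prop:privacy-amplification} itself --- it is quoted as a known preliminary from \citet{Kasiviswanathan2008,balle2018} --- but the paper's proof of its generalization, Theorem~\ref{thm:amplification}, uses exactly your decomposition: condition on whether the differing point is selected, use independence of the selection events to identify the non-selected branch with $P(Y)$, and apply the DP guarantee of $\M$ on the selected branch. So your argument is essentially the same route, instantiated for constant $q(\x)=p$. The two places where you go beyond what the paper writes down are (i) carrying the $\delta$ term through (the paper's proof is explicitly restricted to $\delta=0$), and (ii) spelling out the reverse direction $P(Y)\leq e^{\epsilon'}Q(Y)+\delta'$, which the paper dispatches with ``an identical argument''; your verification via the identity $\left(1+p(e^{\epsilon}-1)\right)\left(1-p(1-e^{-\epsilon})\right)=1+p(1-p)(e^{\epsilon}-1)^2e^{-\epsilon}\geq 1$ and the check $e^{-\epsilon}\leq 1-p(1-e^{-\epsilon})$ are both correct and are precisely the steps that justify that claim.
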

In the case of heterogeneous sampling probabilities $p_1, \ldots, p_n$, the privacy amplification scales with $\max_i p_i$ instead of $p$.
It is important to note that this only provides meaningful privacy amplification if $\epsilon$ is sufficiently small.
For $\epsilon >1$, we only have $\epsilon' \approx \epsilon - \log(1/p)$.

\myparagraph{Personalized Differential Privacy.}
In many applications, the privacy loss of a mechanism is not uniform across the data set.
This heterogeneity can be captured by the notion of personalized differential privacy (PDP) \citep{jorgensen2015,ebadi2015,alaggan2016}.
\begin{definition}[Personalized differential privacy]%
\label{def:ePDP}
  Let $\epsilon\colon \Xc \to \R_{\geq 0}$.
  A mechanism $\M: \Xc^* \to \Yc$ is said to satisfy $\epsilon$-personalized differential privacy ($\epsilon$-PDP) if, for all data sets $\Dc$ and differing points $\x \in \Xc$, the distributions of $\M(\Dc)$ and $\M(\Dc \cup \{\x\})$ are $\epsilon(\x)$-indistinguishable.
  We call the function $\epsilon(\cdot)$ a \emph{PDP profile} of $\M$.
  \looseness=-1  
\end{definition}

\myparagraph{Importance Sampling.}
In learning problems, the (weighted) objective function that we optimize is often a sum over per-point losses, \ie, $\phi_\Dc = \sum_{\x\in\Dc} w(\x)\ell(\x)$.
Here, we typically have $w(\x)=1$ $\forall\x\in\Dc$ for the full data~$\Dc$.
Uniformly subsampling a data set yields an unbiased estimation of the mean.
A common way to improve upon uniform sampling is to skew the sampling towards important data points while retaining an unbiased estimate of the objective function.
Let $q$ be an importance sampling distribution on $\Dc$ which we use to sample a subset $\Sc$ of $m\ll n$ points and weight them with $w(\x)=(m q(\x))^{-1}$.
Then, the estimation of the objective function is unbiased, \ie, $\E_\Sc\left[\phi_\Sc\right] = \phi_\Dc$.

\section{Privacy Amplification via Importance Sampling}%
\label{sec:privacy-amplification}

This section presents our framework for importance sampling with differential privacy.
In \Cref{sec:31}, we introduce the notion of Poisson importance sampling and state and discuss its general privacy properties.
\Cref{sec:32} presents our first approach for deriving importance sampling distributions, namely privacy-constrained sampling.
Finally, in \Cref{sec:33}, we give a numerical example that illustrates the aforementioned results and compares them to uniform subsampling for the Laplace mechanism.
We defer all proofs to Appendix~\ref{app:sec:proofs}.
\looseness=-1

\subsection{General Sampling Distributions}
\label{sec:31}
We begin by introducing \emph{Poisson importance sampling}, which is the sampling strategy we use throughout the paper.
It is a weighted version of the Poisson sampling strategy described in Proposition~\ref{prop:privacy-amplification}.
\begin{definition}[Poisson Importance Sampling]\label{def:sampling}
    Let $q\colon \Xc \to [0, 1]$ be a function, and $\Dc = \{\x_1, \ldots, \x_n\} \subseteq \Xc$ be a data set.
    A Poisson importance sampler for $q$ is a randomized mechanism $S_q(\Dc) = \{(w_i, \x_i) \mid \gamma_i = 1\}$, where $w_i=1/q(\x_i)$ are weights and $\gamma_1, \ldots, \gamma_n$ are independent Bernoulli variables with parameters $p_i = q(\x_i)$.
\end{definition}
By having each probability $q(\x_i)$ depend only on the data point $\x_i$ itself and keeping the selection events independent, we ensure that the influence of any single data point on the sample is small.
It is important to note that the function definition of $q$ must be data-independent and considered public information, \ie, fixed before observing the data set, while the specific probabilities $\{q(\x_i)\}_{i=1}^n$ evaluated on the data set are not published.
Note also that Poisson importance sampling outputs a weighted data set.
We intend for the subsequent mechanism to use these weights to offset the sampling bias, \eg, by using a weighted sum when the goal is to estimate a population sum.
However, note that our formal privacy results also apply to biased mechanisms.
\looseness=-1
In order to characterize the privacy properties of Poisson importance sampling, we analyze the impact of the sampling probability jointly with the data point's individual privacy loss in the base mechanism.
For this reason, we express our results within PDP, which is in contrast to previous characterizations of data-dependent sampling (\eg, \citet{bun2022}).
Our first result describes the general case of an arbitrary PDP mechanism subsampled with an arbitrary importance sampling distribution.
\begin{restatable}[Amplification by Importance Sampling]{theorem}{AmplificationByImportanceSampling}%
  \label{thm:amplification}
    Let $\M: [1, \infty) \times \Xc^* \to \Yc$ be an $\epsilon$-PDP mechanism that operates on weighted data sets,
    $q\colon \Xc \to [0, 1]$ be a function, and $S_q(\cdot)$ be a Poisson importance sampler for $q$.
    The mechanism $\widehat{\M} = \M \circ \Samp_q$ satisfies $\psi$-PDP with
    \begin{align} \label{eq:amplification-result}
        \psi(\x) = \log\left( 1 + q(\x) \left( e^{\epsilon(w, \x)} - 1 \right) \right), \quad \textnormal{where} \quad w = \frac{1}{q(\x)}.
    \end{align}
\end{restatable}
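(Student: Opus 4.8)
The plan is to reduce the claim to the ordinary subsampling analysis by conditioning on the randomness of the sampler. Write $\Dc' = \Dc \cup \{\x\}$, $p = q(\x)$, and $w = 1/q(\x)$. Since the Bernoulli indicator for $\x$ is independent of the indicators for the points of $\Dc$, running $\Samp_q$ on $\Dc'$ is the same as running $\Samp_q$ on $\Dc$ and then independently adjoining the weighted point $(w, \x)$ with probability $p$. First I would use this to decompose the output law of $\widehat{\M}(\Dc')$ as the mixture $(1 - p) P_0 + p P_1$, where $P_0$ is the law of $\M(\Samp_q(\Dc))$ (equivalently, of $\widehat{\M}(\Dc)$) and $P_1$ is the law of $\M(\Samp_q(\Dc) \cup \{(w, \x)\})$, with the \emph{same} realization of $\Samp_q(\Dc)$ feeding both branches. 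The goal then becomes showing that $P_0$ and $(1 - p) P_0 + p P_1$ are $\psi(\x)$-indistinguishable.

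Next I would establish that $P_0$ and $P_1$ are themselves $\epsilon(w, \x)$-indistinguishable. Conditionally on each fixed realization $\Sc$ of $\Samp_q(\Dc)$, this is exactly the weighted distinguishability profile applied to the neighbors $\Sc$ and $\Sc \cup \{(w, \x)\}$. Since both $P_0$ and $P_1$ are obtained by integrating these conditional laws against the \emph{common} distribution of $\Sc$, the one-sided bounds $P_0(Y) \le e^{\epsilon(w,\x)} P_1(Y)$ and $P_1(Y) \le e^{\epsilon(w,\x)} P_0(Y)$ pass to the mixtures by linearity of the integral. This is the step where the coupling matters: it is essential that the identical copy of $\Sc$ appears in both branches, so that indistinguishability is preserved under the shared mixing measure.

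Finally I would verify the two indistinguishability inequalities by reducing each to a scalar inequality. Fix a measurable event $Y$ and set $a = P_0(Y)$, $b = P_1(Y)$; the previous step gives $e^{-\epsilon} a \le b \le e^{\epsilon} a$ with $\epsilon = \epsilon(w, \x)$, and we abbreviate $e^{\psi(\x)} = 1 + p(e^{\epsilon} - 1)$. The upper direction $(1 - p) a + p b \le e^{\psi(\x)} a$ rearranges to $p(b - a) \le p(e^\epsilon - 1)a$, i.e. $b \le e^{\epsilon} a$, which already holds. The lower direction $a \le e^{\psi(\x)}\left((1 - p) a + p b\right)$ is the only real computation: assuming $a > 0$ and writing $r = b/a \in [e^{-\epsilon}, e^{\epsilon}]$, it becomes $1 \le \left(1 + p(e^\epsilon - 1)\right)\left((1 - p) + p r\right)$, whose right-hand side increases in $r$, so it suffices to check the worst case $r = e^{-\epsilon}$. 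Writing $x = e^\epsilon \ge 1$, that right-hand side equals $(1-p)^2 + p^2 + p(1-p)(x + 1/x)$, which is at least $(1-p)^2 + p^2 + 2p(1-p) = 1$ by $x + 1/x \ge 2$; the degenerate case $a = 0$ forces $b = 0$ and is immediate. The main obstacle is this asymmetric lower direction together with correctly setting up the common-coupling mixture in the first two steps; once those are in place the scalar bound is elementary.
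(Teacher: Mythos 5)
Your proof is correct and follows essentially the same route as the paper's: condition on whether the differing point is selected, use independence of the Bernoulli indicators to identify the unselected branch with $\widehat{\M}$ on the smaller data set, and apply the weighted profile $\epsilon(w,\x)$ in the selected branch. You are in fact more explicit than the paper about the reverse direction, which the paper dismisses as ``an identical argument'' but which genuinely requires the convexity-type scalar bound $\left(1 + p(e^\epsilon - 1)\right)\left((1-p) + p e^{-\epsilon}\right) \geq 1$ that you verify.
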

The sampled PDP profile in \Cref{eq:amplification-result} closely resembles the privacy amplification result for uniform sampling (Proposition~\ref{prop:privacy-amplification}) with the important distinction that $\epsilon(w, \x)$ depends on a weight~$w$ and a data point~$\x$.
This relatively small change has important implications for the trade-offs between privacy, efficiency, and accuracy.
In general, it is no longer obvious whether the privacy loss increases or decreases as a function of the sampling probability.
However, we show in \Cref{app:sec:SamplingImprovesPrivacy} that, for an important class of PDP profiles, the sampled PDP profile is decreasing in $q(\x)$.
This class is important because it includes PDP profiles that are linear in $w$.
In the context of importance sampling, we argue that all mechanisms of interest are linear in $w$, because linearity follows from a simple invariance condition:
we should expect a mechanism to treat a weighted data point $(w, \x)$ the same as if it were $w$ distinct data points of value $\x$, each with weight 1.
For (generalized) linear queries, this invariance coincides with unbiasedness
and thus agrees with the intuitive purpose of an importance weight.
If this invariance holds, the PDP profile is linear in $w$ due to group privacy.

From the above discussion, we conclude that a good sampling distribution should assign high probabilities to informative data points in order to achieve good privacy and utility.
Hence, the privacy-utility-efficiency trilemma reduces to a dilemma between efficiency on the one hand; and privacy and utility on the other.
This suggests two natural approaches to constructing sampling distributions: one that optimizes the privacy-efficiency trade-off and one that optimizes the utility-efficiency trade-off.
We explore the first approach in~\Cref{sec:32} and the second approach in~\Cref{sec:dp-kmeans}.
\looseness=-1

\subsection{Sampling with optimal privacy-efficiency trade-off}%
\label{sec:32}
We now describe how to construct a sampling distribution that achieves a given $\epsilon$-DP guarantee with minimal expected sample size.
The motivation for this is twofold.
First, as we have seen in the previous subsection, sample size is the primary limiting factor to privacy in importance sampling and additionally serves as the primary indicator of efficiency.
Secondly, by imposing a constant PDP bound as a constraint we can ensure that the subsampled mechanism satisfies $\epsilon$-DP by design.
This is not obvious from \Cref{thm:amplification}, as the resulting PDP profile may be unbounded.

Minimizing the expected sample size subject to a given $\epsilon^\star$-DP constraint can be described as the following optimization problem.
\begin{restatable}[Privacy-constrained sampling]{problem}{PrivacyOptimalSampling}%
\label{prb:privacy-optimization-problem}
    For a PDP profile $\epsilon\colon [1, \infty) \times \Xc \to \R_{\geq 0}$, a target privacy guarantee~$\epsilon^\star$, and a data set $\Dc \subseteq \Xc$ of size $n$, we define the optimization problem for privacy-constrained sampling as
    \looseness=-1
    \begin{subequations}%
    \label{eq:privacy-optimization-problem}
        \begin{alignat}{2}
            \label{eq:objective}
            \minimize_{w_1, \ldots, w_n} \quad & \sum_{i=1}^n \frac{1}{w_i} \\
            \label{eq:constraint-1}
            \textnormal{subject to} \quad & \log\left( 1 + \frac{1}{w_i} \left( e^{\epsilon(w_i, \x_i)} - 1 \right) \right) \leq \epsilon^\star \qquad & \textnormal{for all}\; i, \\
            \label{eq:constraint-2}
            &w_i \geq 1, \qquad & \textnormal{for all}\; i.
        \end{alignat}
    \end{subequations}
\end{restatable}
The constraint in Equation~\eqref{eq:constraint-1} captures the requirement that $\psi$ should be bounded by $\epsilon^\star$ for all $\x \in \Xc$, and the constraint in Equation~\eqref{eq:constraint-2} ensures that $1/w_i$ is a probability.
When the PDP profile is linear in $w$, the problem can be solved with standard convex optimization techniques.
Below, we provide a more general result that guarantees a unique solution and an efficient algorithm for a broad class of (possibly) nonconvex PDP profiles.
In order to guarantee a unique solution, we require the following mild regularity conditions.
\begin{restatable}[]{assumption}{AssumptionOne}%
\label{ass:epsilon-star}
    For all $\x \in \Xc$, $\epsilon^\star \geq \epsilon(1, \x)$.
\end{restatable}
\begin{restatable}[]{assumption}{AssumptionTwo}%
\label{ass:asymptotic}
    For all $\x \in \Xc$, there is a constant $v_\x \geq 1$ such that $\epsilon(w, \x) > \log (1 + w ( e^{\epsilon^\star} - 1))$ for all $w \geq v_\x$.
\end{restatable}
Assumption~\ref{ass:epsilon-star} ensures that the feasible region is non-empty, because $w_1 = w_2 = \cdots = w_n = 1$ always satisfies the constraints, while Assumption~\ref{ass:asymptotic} essentially states that, asymptotically, $\epsilon$ should grow at least logarithmically fast with $w$, ensuring that the feasible region is bounded.
Formally, our existence result is as follows.
\looseness=-1
\begin{restatable}[]{theorem}{AmplificationPrivacy}%
\label{thm:amplification-privacy}
    Let Assumptions~\ref{ass:epsilon-star} and \ref{ass:asymptotic} be satisfied.
    There is a data set-independent function $w\colon \Xc \to [1, \infty)$, such that, for all data sets $\Dc \in \Xc^*$, Problem~\ref{prb:privacy-optimization-problem} has a unique solution $w^\star(\Dc) = (w^\star_1(\Dc), \ldots, w_n^\star(\Dc))$  of the form $w_i^\star(\Dc) = w(\x_i)$.
    Furthermore, let $\M$ be a mechanism that admits the PDP profile $\epsilon$ and $S_q$ be a Poisson importance sampler for $q(\x) = 1/w(\x)$.
    Then, $\M \circ S_q$ satisfies $\epsilon^\star$-DP.
\end{restatable}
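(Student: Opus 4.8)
The plan is to exploit the \emph{separable} structure of Problem~\ref{prb:privacy-optimization-problem}. Writing $g(w, \x) = \log\!\left( 1 + \tfrac{1}{w}\left( e^{\epsilon(w, \x)} - 1 \right) \right)$ for the left-hand side of the constraint in Equation~\eqref{eq:constraint-1}, both the objective $\sum_i 1/w_i$ and the constraints decouple across the index $i$: the objective is a sum of terms $1/w_i$, and the $i$-th constraint involves only $w_i$ together with the fixed point $\x_i$. Since $1/w_i$ is strictly decreasing in $w_i$, minimizing the objective is equivalent to independently \emph{maximizing} each $w_i$ subject to $g(w_i, \x_i) \le \epsilon^\star$ and $w_i \ge 1$. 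Because the feasible set of the $i$-th scalar problem depends on the data only through $\x_i$ (the function $\epsilon$ and the target $\epsilon^\star$ being fixed in advance), its optimizer is a function of $\x_i$ alone, which is precisely what yields a data set-independent $w\colon \Xc \to [1,\infty)$ with $w_i^\star(\Dc) = w(\x_i)$.

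First I would define $w(\x) = \sup\{\, w \ge 1 : g(w, \x) \le \epsilon^\star \,\}$ and show that its feasible set $F_\x$ is nonempty and bounded. Nonemptiness is immediate from Assumption~\ref{ass:epsilon-star}: at $w = 1$ we have $g(1, \x) = \log\!\left(1 + (e^{\epsilon(1,\x)} - 1)\right) = \epsilon(1, \x) \le \epsilon^\star$, so $1 \in F_\x$. Boundedness comes from Assumption~\ref{ass:asymptotic}: for $w \ge v_\x$ the bound $\epsilon(w, \x) > \log(1 + w(e^{\epsilon^\star} - 1))$ rearranges to $e^{\epsilon(w,\x)} - 1 > w(e^{\epsilon^\star} - 1)$, hence $g(w, \x) > \epsilon^\star$; thus $F_\x \subseteq [1, v_\x)$ and $w(\x) \le v_\x < \infty$.

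The main obstacle is to show that the supremum is \emph{attained} and the solution \emph{unique}, which is exactly where the oscillation pathology flagged after Problem~\ref{prb:privacy-optimization-problem} could intervene: without further regularity, $F_\x$ need not be an interval and its supremum need not belong to $F_\x$. I would resolve this using continuity of $\epsilon(\cdot, \x)$ in $w$ (consistent with the differentiability assumed in Proposition~\ref{pro:better-than-full} and natural for profiles of concrete mechanisms), which makes $g(\cdot, \x)$ continuous and therefore $F_\x$ closed; combined with boundedness, the maximum $w(\x) = \max F_\x \in F_\x$ is attained and feasible. Uniqueness of the optimal vector then follows from strict monotonicity of the objective, since any minimizer must set every coordinate to the largest feasible value, which is the single point $w(\x_i)$. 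I expect this attainment step to be the delicate part, as it is the only place where the assumptions must genuinely exclude degenerate, highly irregular profiles.

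Finally I would convert feasibility into the claimed privacy guarantee. Setting $q(\x) = 1/w(\x)$ and applying Theorem~\ref{thm:amplification}, the mechanism $\M \circ S_q$ admits the distinguishability profile $\psi(\x) = \log\!\left(1 + q(\x)(e^{\epsilon(w(\x), \x)} - 1)\right) = g(w(\x), \x)$. Since $w(\x) \in F_\x$ by construction, $\psi(\x) \le \epsilon^\star$ for every $\x \in \Xc$. A uniformly bounded distinguishability profile implies that $\M(\Dc)$ and $\M(\Dc \cup \{\x\})$ are $\epsilon^\star$-indistinguishable for all $\Dc$ and $\x$; as neighboring data sets differ by exactly one point, $\M \circ S_q$ is $\epsilon^\star$-DP, completing the argument.
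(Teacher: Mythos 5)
Your proposal follows essentially the same route as the paper's proof: exploit separability to reduce to $n$ independent scalar maximizations of $w_i$, use Assumption~\ref{ass:epsilon-star} for feasibility at $w_i = 1$ and Assumption~\ref{ass:asymptotic} for boundedness of the feasible region, and conclude uniqueness from strict monotonicity of the objective. If anything you are more careful than the paper: you assign the two assumptions their correct roles (the paper's proof states them in swapped order), you explicitly address attainment of the supremum via continuity of $\epsilon(\cdot, \x)$ in $w$ (a genuine gap the paper's one-line proof leaves open, given the oscillation caveat it itself raises after Problem~\ref{prb:privacy-optimization-problem}), and you spell out the final step of converting the bounded profile $\psi$ into the $\epsilon^\star$-DP guarantee via Theorem~\ref{thm:amplification}.
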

The fact that the weights are of the form $w_i^*(\Dc) = w(\x_i)$ is important for privacy.
It ensures that the probability map $q(\cdot)$ is only a function of $\x_i$ and not of the remainder of the data set, which is a requirement for Poisson importance sampling.

Note also that, with privacy-constrained sampling, we can achieve some sampling ``for free'': if the base mechanism satisfies $\epsilon$-DP and we choose $\epsilon^*=\epsilon$, then the sampled mechanism also satisfies $\epsilon$-DP while operating on a smaller sample.
\looseness=-1

Next, in~\Cref{alg:privacy-optimal-weights}, we describe an efficient algorithm to solve privacy-constrained sampling.
We reduce Problem~\ref{prb:privacy-optimization-problem} to a scalar root-finding problem and solve it via bisection.
Since the sampled PDP profile might have many roots, we require an additional assumption in order to identify which root is optimal.
We introduce Assumption~\ref{ass:strongly-convex} which is slightly stronger than Assumption~\ref{ass:asymptotic}, but still permits a variety of nonconvex PDP profiles.
\looseness=-1

\begin{restatable}[]{assumption}{AssumptionThree}%
\label{ass:strongly-convex}
  The function $\epsilon(w, \x)$ is differentiable w.r.t.\ $w$ and $\exp \circ\, \epsilon$ is $\mu_\x$-strongly convex in $w$ for all $\x$.
  \looseness=-1
\end{restatable}
\begin{restatable}[]{algorithm}{PrivacyOptimalImportanceWeights}
    \caption{Optimization for privacy-constrained importance weights}%
    \label{alg:privacy-optimal-weights}
  \begin{algorithmic}[1]
    \State {\bfseries Input:} Data set $\Dc=\{\x_1, \ldots, \x_n\}$, target privacy guarantee $\epsilon^\star>0$, PDP profile $\epsilon$, its derivative $\epsilon'$ with respect to~$w$, and strong convexity constants $\mu_1, \ldots, \mu_n$
    \State {\bfseries Output:} Importance weights $w_1, \ldots, w_n$
    \For{$i=1,\ldots,n$}
      \State Define $g_i(w) = \frac{1}{w} \left( e^{\epsilon(w, \x_i)} - 1 \right) - \left(e^{\epsilon^\star} - 1\right)$
      \State $\bar{v}_i \gets \min
      \{ e^{\epsilon(1, \x_i)}  + \mu_i / 2,\ \epsilon'(1, \x_i) e^{\epsilon(1, \x_i)} + 1 \}$
      \State $b_i \gets 2(e^{\epsilon^\star} - \bar{v}_i)/\mu_i + 1$
      \If{$\epsilon(1, \x_i) = \epsilon^\star$ and $\epsilon'(1, \x_i) < 0$}
          \State $w_i \gets$ Bisect $g_i$ with initial bracket $(1, b_i]$
      \Else
          \State $w_i \gets$ Bisect $g_i$ with initial bracket $[1, b_i]$
      \EndIf
    \EndFor
  \end{algorithmic}
\end{restatable}
As the following proposition shows, Algorithm~\ref{alg:privacy-optimal-weights} solves Problem~\ref{prb:privacy-optimization-problem} using only a few evaluations of $\epsilon(w, \x)$.
We note that faster solutions are possible, \eg, via Newton's method, but we have found bisection to be sufficiently fast in our experiments (see~\Cref{sec:experiments}).
\looseness=-1
\begin{restatable}[]{proposition}{AmplificationNumerical}
    \label{pro:amplification-numerical}
    Let Assumptions~\ref{ass:epsilon-star} and \ref{ass:strongly-convex} be satisfied.
    Algorithm~\ref{alg:privacy-optimal-weights} solves Problem~\ref{prb:privacy-optimization-problem} up to accuracy~$\alpha$ with at most
    $\sum_{\x \in \Dc} \log_2 \lceil (e^{\epsilon^\star} - \bar{v}_\x) / (\alpha \mu_\x) \rceil$ evaluations of $\epsilon(w, \x)$, where
    $\epsilon'(w, \x) = \partial/\partial w\, \epsilon(w, \x)$ and
    $\bar{v}_\x = \min \{ {e^{\epsilon(1, \x)} + \mu_\x/2,}\ \epsilon'(1, \x) e^{\epsilon(1, \x)} + 1 \}$.
\end{restatable}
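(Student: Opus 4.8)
The plan is to exploit the separability of Problem~\ref{prb:privacy-optimization-problem}: since the objective $\sum_i 1/w_i$ is decreasing in each $w_i$ and the constraints~\eqref{eq:constraint-1}--\eqref{eq:constraint-2} decouple across $i$, the problem reduces to maximizing each $w_i$ individually subject to $g_i(w_i)\le 0$ and $w_i\ge 1$, where $g_i$ is the function defined in Algorithm~\ref{alg:privacy-optimal-weights}. Fixing a point $\x=\x_i$ and abbreviating $h(w)=e^{\epsilon(w,\x)}$, the privacy constraint~\eqref{eq:constraint-1} is equivalent to $h(w)\le 1+w(e^{\epsilon^\star}-1)$, i.e.\ to $g_i(w)\le 0$, since $g_i(w)=(h(w)-(1+w(e^{\epsilon^\star}-1)))/w$ and $w>0$. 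Hence the per-point task is to locate the largest $w\ge 1$ with $g_i(w)\le 0$; by Theorem~\ref{thm:amplification-privacy} this maximizer $w^\star$ exists and is unique, and it remains to show that Algorithm~\ref{alg:privacy-optimal-weights} brackets and isolates it within the claimed evaluation budget.

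First I would determine the shape of $g_i$ on $[1,\infty)$. Writing $g_i'(w)=\phi(w)/w^2$ with $\phi(w)=w h'(w)-h(w)+1$, one gets $\phi'(w)=w\,h''(w)$. Under Assumption~\ref{ass:strongly-convex}, $h=\exp\circ\,\epsilon$ is $\mu_\x$-strongly convex, so $h'$ is $\mu_\x$-strongly monotone and $\phi'(w)=w\,h''(w)\ge\mu_\x w>0$; in either reading $\phi$ is strictly increasing on $(0,\infty)$. Therefore $g_i'$ changes sign at most once, from negative to positive, so $g_i$ is either increasing or valley-shaped on $[1,\infty)$. Combined with $g_i(1)=h(1)-e^{\epsilon^\star}\le 0$, which is exactly Assumption~\ref{ass:epsilon-star}, this yields a \emph{single} crossing from nonpositive to positive values at $w^\star$: $g_i<0$ on $(1,w^\star)$ and $g_i>0$ beyond $w^\star$. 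This is precisely the structure that makes bracketing bisection well-defined, because any right endpoint $b$ with $g_i(b)\ge 0$ makes $w^\star$ the unique sign change inside $[1,b]$, so the iterates converge to $w^\star$ regardless of the intermediate dip.

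The main obstacle is to certify that the right endpoint $b_i$ of Algorithm~\ref{alg:privacy-optimal-weights} is valid, i.e.\ that $g_i(b_i)\ge 0$, equivalently $b_i\ge w^\star$. The intended route is the strong-convexity quadratic lower bound $h(w)\ge h(1)+h'(1)(w-1)+\tfrac{\mu_\x}{2}(w-1)^2$, which reduces $g_i(b_i)\ge 0$ to a quadratic inequality in $u=w-1$ with leading coefficient $\mu_\x/2$, constant term $h(1)-e^{\epsilon^\star}$, and linear coefficient $h'(1)-(e^{\epsilon^\star}-1)$. The two arguments of $\bar v_i=\min\{e^{\epsilon(1,\x)}+\mu_\x/2,\ \epsilon'(1,\x)e^{\epsilon(1,\x)}+1\}$ correspond to bounding this quadratic in the two regimes governed by its constant versus its linear term, and $b_i=1+2(e^{\epsilon^\star}-\bar v_i)/\mu_\x$ is the value of $w$ that this bound yields for the upper bracket. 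Carrying out this case analysis so that $b_i$ provably dominates the relevant root of the quadratic is the delicate, constant-sensitive heart of the argument, and I expect it to require the most care. A secondary subtlety is the boundary case $g_i(1)=0$ (that is, $\epsilon(1,\x)=\epsilon^\star$), where $w=1$ is itself a root: since $g_i'(1)=\phi(1)=\epsilon'(1,\x)e^{\epsilon(1,\x)}-e^{\epsilon(1,\x)}+1$, the sign of $\epsilon'(1,\x)$ controls whether $g_i$ rises or dips immediately to the right of $1$, and the open/closed bracket distinction in the algorithm serves to discard the spurious root at $w=1$ and retain the interior one; I would verify that the stated sign test selects the correct maximizer.

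Finally, the complexity count follows from the standard analysis of bisection. On an initial bracket of width $b_i-1=2(e^{\epsilon^\star}-\bar v_i)/\mu_\x$, halving until the length drops below the tolerance $\alpha$ takes at most $\log_2\lceil(e^{\epsilon^\star}-\bar v_\x)/(\alpha\mu_\x)\rceil$ steps, each requiring a single evaluation of $\epsilon(\cdot,\x)$ through $g_i$ (the constant $e^{\epsilon^\star}-1$ and the quantities at $w=1$ entering $\bar v_i$ are computed once). Summing over the $n$ decoupled subproblems $\x\in\Dc$ gives the claimed total $\sum_{\x\in\Dc}\log_2\lceil(e^{\epsilon^\star}-\bar v_\x)/(\alpha\mu_\x)\rceil$. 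In short, the monotonicity and complexity steps are routine, and the certification of the bracket endpoint via the strong-convexity quadratic is where the real work lies.
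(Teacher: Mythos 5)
Your plan follows essentially the same route as the paper's proof: decouple the problem per point, apply the strong-convexity lower bound of $\exp\circ\,\epsilon$ at $w=1$ to certify that $b_i$ bounds the feasible region (with the two arguments of $\bar v_i$ arising from a sign split, exactly as you anticipate), handle the $\epsilon(1,\x)=\epsilon^\star$ boundary case via the open bracket, and count bisection steps. The case analysis you defer is carried out in the paper by dividing the quadratic bound by $w$ and splitting on the sign of the resulting $w^{-1}$ coefficient, which is equivalent to your quadratic-in-$u=w-1$ formulation and confirms your identification of the coefficients and of $\bar v_i$.
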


\subsection{Importance sampling for the Laplace mechanism}%
\label{sec:33}

We conclude this section with a simple numerical example for \Cref{thm:amplification,thm:amplification-privacy} to illustrate that (i)~privacy and utility are well-aligned goals in importance sampling and (ii)~uniform sampling is highly suboptimal even for very fundamental mechanisms.
For this purpose, we generate synthetic data on which we run a Laplacian sum mechanism and compare privacy-constrained sampling to uniform sampling as well as to an idealized benchmark in terms of privacy and variance at a fixed expected sample size.
\looseness=-1

Let $\Dc = \{\x_i\}_{i=1}^n$ be a data set and $\Ic \subseteq \{1, \ldots, n\}$ a subset of indices.
We consider the Laplacian weighted sum mechanism $\M_\text{LWS}(\widetilde{\Dc}) = \zeta + \sum_{i \in \Ic} w_i \x_i$
as an example, where $\widetilde{\Dc} = \{(w_i, \x_i)\}_{i \in \Ic}$ is a weighted subset of $\Dc$ and $\zeta$ is standard Laplace distributed.
The PDP profile of $\M_\text{LWS}$ is given by $\epsilon_\text{LWS}(w, \x) = w \|\x\|_1$.
When $\widetilde{\Dc}$ is obtained by Poisson importance sampling $\Sc_q$ from $\Dc$, then the mechanism $\M_\text{LWS} \circ \Sc_q$ has variance
\begin{align*}
    \operatorname{Var}\left[\M_\text{LWS}(\widetilde{\Dc})_j\right]  = 2 + \sum_{i=1}^n \left(\frac{1}{q(\x_i)} - 1\right) [\x_{i}]_j^2
\end{align*}
in the $j$-th dimension, where the randomness is over both the noise and the sampling.
We compare three sampling strategies: uniform sampling $q_\text{unif}(\x) = m/n$, privacy-constrained sampling $q_\text{priv}$ according to \Cref{thm:amplification-privacy}, and an idealized benchmark which we call variance-optimal sampling $q_\text{var}$, defined as the solution to the following optimization problem:
\looseness=-1
\begin{align*}
    \minimize_{q} \quad \sum_{j=1}^d \operatorname{Var}\left[\M_\text{LWS}(\widetilde{\Dc})_j\right] \quad \textnormal{subject to} \quad & \sum_{i=1}^n q(\x_i) = m.
\end{align*}

The variance-optimal distribution serves as a lower bound on the variance achievable by any DP sampling distribution.
It does not satisfy $\epsilon$-DP itself, as it requires oracle access to the data set.
As we show below, the privacy-constrained distribution achieves performance close to variance-optimal while also satisfying $\epsilon$-DP.

\begin{figure}[t]
    \centering
    \includegraphics[width=0.7\linewidth]{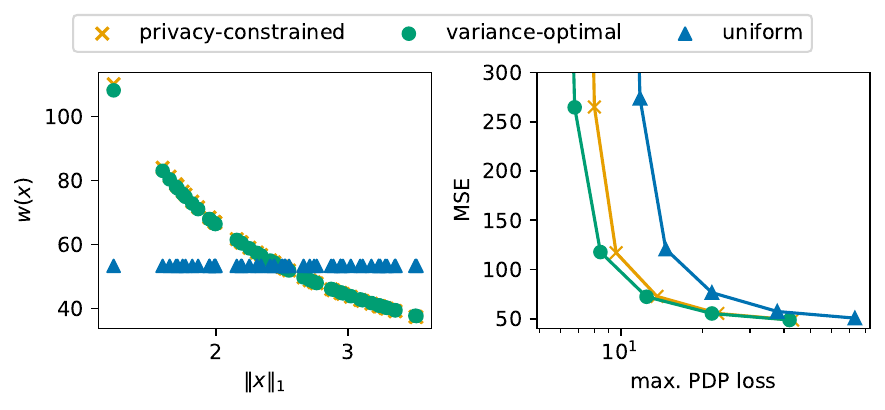}
    \caption{Comparison of sampling strategies for the Laplace sum mechanism.
    \textbf{Left}: The importance weight for each data point.
    Each marker in the plot represents one data point.
    The points are ordered by their $\ell_1$-norm.
    \textbf{Right}: Estimation error for varying noise scales.
    Lower is better.
    The horizontal axis is the maximum PDP loss over the data set.
    }%
    \label{fig:lsm-plot}
\end{figure}

We generate $n=\num{1000}$ points from an isotropic multivariate normal distribution in $d=10$ dimensions with variance $\sigma^2 = 1/d$ in each dimension.
First, we visualize the importance weights $w(\x_i)$ for each sampling strategy.
For this, we fix a target sample size at $m=19$ and compute the weights $w(\x_i)$ for each sampling strategy that achieve the target sample size in expectation.
The results are shown in \Cref{fig:lsm-plot} (left).
Remarkably, the privacy-constrained weights and the variance-optimal weights are almost identical.

Next, we compare mean-squared error (MSE) and privacy loss of the sampling strategies at different privacy levels.
For this, we vary the noise scale~$b$ of the Laplace mechanism over a coarse grid in $[3, 3000]$.
For each~$b$, we fix the expected sample size $m_b$ by computing the privacy-constrained weights.
Then, we compute the corresponding weights for the other two sampling strategies such that they achieve $m_b$ in expectation.
We then compute the individual PDP losses $\psi(\x_i)$ for each sampling strategy and obtain their respective maximum over the data set.
Finally, we compute the MSE between the quantities $\sum_{\mathbf{x}_i \in \mathcal{D}} \mathbf{x}_i$ and $\mathcal{M}_\text{LWS}(\tilde{\mathcal{D}})$ of each sampling strategy by averaging over \num{1000} independent runs.
The resulting PDP losses and MSEs in \Cref{fig:lsm-plot} (right) show substantial improvements of privacy-constrained sampling over uniform sampling.
\looseness=-1

\section{Importance Sampling for DP $k$-means}%
\label{sec:dp-kmeans}
In this section, we apply our results on privacy amplification for importance sampling to the differentially private $k$-means problem.
We define a weighted version of the DP-Lloyd algorithm, and analyze its PDP profile.
Then, we derive the required constants for the privacy-constrained distribution and establish a privacy guarantee for a coreset-based sampling distribution.
All proofs are deferred to Appendix~\ref{app:sec:proofs}.

\myparagraph{Differentially Private $k$-means Clustering.}
Given a data set $\Dc$ as introduced before, the goal of $k$-means clustering is to find a set of $k\in\N$ cluster centers $\Cc = \{\cb_1, \ldots, \cb_k\}$ that minimize the distance of the data points to their closest cluster center.
The objective function is given as $\phi_\Dc(\Cc) = \sum_{i=1}^n d(\x_i, \Cc)$, where $d(\cdot,\cdot)$ is the distance of a point $\x$ to its closest cluster center~$\cb_j$, which is $d(\x, \Cc) = \min_j \| \x - \cb_j \|^2_2$.
The standard approach of solving the $k$-means problem is Lloyd's algorithm \citep{lloyd82}.
It starts by initializing the cluster centers $\Cc$ and then alternates between two steps.
First, given the current cluster centers $\Cc$, it computes the cluster assignments $\Cc_j = \{ \x_i \mid j = \argmin_j \| \x_i - \cb_j \|^2_2 \}$ for every $j=1,\ldots,k$.
Second, it updates all cluster centers to be the mean of their cluster $\cb_j = \frac{1}{|\Cc_j|} \sum_{\x \in \Cc_j} \x$.
Those two steps are iterated until convergence.
\looseness=-1

The standard way to make Lloyd's algorithm differentially private is to add appropriately scaled noise to both steps of the algorithm.
Gaussian noise has been suggested \citep{blum2005practical}, resulting in $(\epsilon, \delta)$-DP and Laplace noise has been suggested \citep{su2016differentially,su2017differentially} for $\ell_1$ geometry, resulting in $\epsilon$-DP.
Here, we give a generalized version based on the exponential mechanism \citep{mcsherry2007} that guarantees $\epsilon$-DP for any $\ell_p$ geometry.
We refer to it as DP-Lloyd.
Let $\xi\in\R^k$ be a random vector whose entries independently follow a zero mean Laplace distribution with scale $\betac>0$.
Furthermore, let $\zeta_1, \ldots, \zeta_k \in\R^{d}$ be independent random vectors, each drawn from the density $p(\zeta) \propto \exp\left(-\|\zeta\|_p / \betas\right)$.
The cluster centers are then updated as follows:
\looseness=-1
\begin{align*}
    \cb_j = \frac{1}{\xi_j+|\Cc_j|} \left( \zeta_j + \sum_{\x \in \Cc_j} \x \right) \quad \text{for } j=1,2,\ldots,k.
\end{align*}
Assuming the points have bounded $\ell_p$-norm, \ie, $\Xc = \Bc_p(r)$ for some $r > 0$, then DP-Lloyd preserves $\epsilon$-DP with $\epsilon = \left( r/\betas + 1/\betac \right)T$ after $T$ iterations.

\myparagraph{Weighted DP Lloyd's Algorithm.}
In order to apply importance subsampling to $k$-means, we first define a weighted version of DP-Lloyd.
Each iteration of DP-Lloyd consists of a counting query and a sum query, which generalize naturally to the weighted scenario.
Specifically, for a weighted data set $\Sc = \{(w_i, \x_i)\}_{i=1}^{n}$, we define the update step to be
\begin{align*}
    \cb_j = \frac{1}{\xi_j+\sum_{(w,\x)\in\Cc_j} w} \left( \zeta_j + \sum_{(w,\x) \in \Cc_j} w\x \right) \quad \text{for } j=1,2,\ldots,k.
\end{align*}
We summarize this in Algorithm~\ref{alg:weighted-dp-lloyd} which can be found in Appendix~\ref{app:sec:proofs}.
This approach admits the following PDP profile which generalizes naturally from the $\epsilon$-DP guarantee of DP-Lloyd.
\begin{restatable}[]{proposition}{DpLloydDistinguishabilityProfile}%
\label{pro:dp-lloyd-distinguishability-profile}
    The weighted DP Lloyd algorithm (Algorithm~\ref{alg:weighted-dp-lloyd}) satisfies the PDP profile
    \begin{align}
        \epsilon_\mathrm{Lloyd}(w, \x) = \left( \frac{1}{\betac} + \frac{\|\x\|_p}{\betas} \right) Tw.
    \end{align}
\end{restatable}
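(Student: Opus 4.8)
The plan is to reduce the privacy analysis of Algorithm~\ref{alg:weighted-dp-lloyd} to that of the noisy count and noisy sum queries performed in each of the $T$ iterations, treating the cluster-center update $\cb_j = s_j/n_j$ as post-processing. Concretely, I would view the full output transcript as the collection of noisy counts $n_j^{(t)} = \xi_j + \sum_{(w,\x)\in\Cc_j} w$ and noisy sums $s_j^{(t)} = \zeta_j + \sum_{(w,\x)\in\Cc_j} w\x$ over all clusters $j=1,\ldots,k$ and iterations $t=1,\ldots,T$. Since the centers used in iteration $t$ are a deterministic function of the transcript of iterations $1,\ldots,t-1$, the privacy of the released centers follows from the privacy of this transcript by closedness under post-processing, and the $T$ iterations are handled by adaptive sequential composition of pure DP.

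For a single iteration with the centers held fixed (as determined by the preceding outputs), I would compare the weighted data sets $\Sc$ and $\Sc\cup\{(w',\x')\}$ from Definition~\ref{def:weighted-profile}. The crucial observation is that, given fixed centers, each existing point's cluster assignment depends only on its distances to those centers, so adding the point $(w',\x')$ leaves all existing assignments unchanged and places $(w',\x')$ into exactly one cluster. Consequently the joint count histogram $(\sum_{(w,\x)\in\Cc_1} w,\ldots,\sum_{(w,\x)\in\Cc_k} w)$ changes in a single coordinate by $w'$, so its $\ell_1$-sensitivity is $w'$; likewise the joint sum histogram changes by $w'\x'$ in one block, so its $\ell_1$-sensitivity is $\|w'\x'\|_1 = w'\|\x'\|_1$. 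Because Laplace noise is added independently to each coordinate with scale $\betac$ for the counts and $\betas$ for the sums, the Laplace mechanism yields that the count query is $(w'/\betac)$-indistinguishable and the sum query is $(w'\|\x'\|_1/\betas)$-indistinguishable; basic composition of the two within the iteration gives $w'(1/\betac + \|\x'\|_1/\betas)$.

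Finally I would compose across the $T$ iterations. Since for every realization of the preceding transcript (equivalently, for every choice of centers) the per-iteration guarantee above holds with the same bound, adaptive composition of pure DP applies and the parameters simply add, yielding the total profile $\epsilon_\mathrm{Lloyd}(w',\x') = (1/\betac + \|\x'\|_1/\betas)Tw'$, which is the claim after renaming $(w',\x')$ to $(w,\x)$. Symmetry of the indistinguishability requirement in Definition~\ref{def:indistinguishability} is automatic because the Laplace mechanism is symmetric with $\delta=0$.

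The main obstacle I anticipate is the adaptivity: the cluster centers, and hence which bin of each histogram the new point lands in, depend on the noisy history and differ between the two runs. I would resolve this by invoking adaptive composition so that the two runs are coupled on a common output prefix, and by verifying that the per-iteration sensitivity bounds $w'$ (for counts) and $w'\|\x'\|_1$ (for sums) hold \emph{uniformly} over all possible centers --- the enabling fact being that, given any fixed centers, a single added point perturbs exactly one bin of each histogram regardless of where the remaining points fall.
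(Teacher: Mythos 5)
Your proposal is correct and follows essentially the same route as the paper's proof: decompose each iteration into a weighted count query and a weighted sum query, bound each via the Laplace mechanism with per-point sensitivities $w'$ and $w'\|\x'\|_1$ respectively, and combine over the $T$ iterations by adaptive composition. If anything, you are more explicit than the paper about why the sensitivity bounds hold uniformly over the (noise-dependent) cluster centers, which is a welcome clarification rather than a deviation.
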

\myparagraph{Privacy-constrained sampling.}
In order to compute the privacy-constrained weights via Algorithm~\ref{alg:privacy-optimal-weights}, we need the strong convexity constant of $\epsilon_\mathrm{Lloyd}$, which is readily obtained via the second derivative:
\begin{align} \label{eq:lloyd-strong-convexity}
    \frac{\partial^2 }{\partial^2 w}\exp\left(\epsilon_\mathrm{Lloyd}(w, \x)\right)
    \geq T^2 \left( \frac{1}{\betac} + \frac{\|\x\|_p}{\betas} \right)
    \exp \left(\left( \frac{1}{\betac} + \frac{\|\x\|_p}{\betas} \right)T\right) 
    \eqcolon \mu.
\end{align}
Besides the privacy-constrained distribution, we also consider a coreset-based sampling distribution.
Before doing so, we first introduce the idea of a coreset.

\myparagraph{Coresets for $k$-means.}
A coreset is a weighted subset $\Sc\subseteq\Dc$ of the full data set $\Dc$ with cardinality \mbox{$m \ll n$}, on which a model performs provably competitive when compared to the performance of the model on $\Dc$.
Since we are now dealing with weighted data sets, we define the weighted objective of $k$-means to be $\phi_\Dc(\Cc) = \sum_{\x\in\Dc} w(\x) d(\x, \Cc)$, where $w(\x)\geq0$ are the non-negative weights.
In this paper, we use a sampling distribution inspired by a lightweight coreset construction as introduced by \citet{bachem2018scalable}.
\begin{definition}[Lightweight coreset]
    Let $\varepsilon > 0$, $k \in \N$, and $\Dc \in \Xc^*$ be a set of points with mean~$\xb$.
    A weighted set $\Sc$ is a $(\varepsilon,k)$-lightweight coreset of the data $\Dc$ if for any $\Cc \subseteq \R^d$ of cardinality at most $k$ we have \mbox{$|\phi_\Dc(\Cc) - \phi_\Sc(\Cc)| \leq \frac{\varepsilon}{2} \phi_\Dc(\Cc) + \frac{\varepsilon}{2} \phi_\Dc(\{\xb\})$}.
\end{definition}
Note that the definition holds for any choice of cluster centers $\Cc$.
\citet{bachem2018scalable} propose the sampling distribution $q(\x) = \frac12\frac1n + \frac12\frac{d(\x,\xb)}{\sum_{i=1}^n d(\x_i,\xb)}$ and assign each sampled point $\x$ the weight $(m q(\x))^{-1}$.
This is a mixture distribution of a uniform and a data-dependent part.
Note that by using these weights, $\phi_\Sc(\Cc)$ yields an unbiased estimator of $\phi_\Dc(\Cc)$.
The following theorem shows that this sampling distribution yields a lightweight coreset with high probability.
\begin{theorem}[\citet{bachem2018scalable}]\label{thm:bachem2018}
    Let $\varepsilon > 0, \Delta \in (0,1)$, $k \in \N$, $\Dc$ be a set of points in $\Xc$, and $\Sc$ be the sampled subset according to $q$ with a sample size~$m$ of at least $m \geq c \frac{kd \log k - \log \Delta}{\varepsilon^2}$, where $c>0$ is a constant.
    Then, with probability of at least $1-\Delta$, $\Sc$ is an $(\varepsilon,k)$-lightweight-coreset of~$\Dc$.
\end{theorem}

\myparagraph{Coreset-based sampling distribution.}
We adapt the sampling distribution from Theorem~\ref{thm:bachem2018} 
to the Poisson sampling setting and propose $q(\x) = \lambda \frac{m}{n} + (1 - \lambda) \frac{m \|\x\|_2^2}{n \tilde{\x}}$, where $m \ll n$ is the expected subsample size and $\tilde{\x} = \frac{1}{n} \sum_{i=1}^n \|\x_i\|_2^2$ is the average squared $\ell_2$-norm.
To ensure proper probabilities, it is necessary to constrain the subsampling size to $m \leq n \tilde{\x} r^{-2}$.
Compared to \citet{bachem2018scalable}, there are three changes: (i)~the change to a Poisson sampling setting, (ii)~the assumption that the data set $\Xc$ is centered, and (iii)~the introduction of $\lambda\in[0,1]$ which yields a uniform sampler for the choice of $\lambda=1$.

We compute the $\epsilon$-DP guarantee for DP Lloyd's algorithm with coreset-based sampling as follows.
We apply \Cref{thm:amplification-privacy} to the PDP profile derived in \Cref{pro:dp-lloyd-distinguishability-profile} and the coreset-based sampling distribution.
For positive constants $a, b, s, t > 0$, this yields a $\psi$-PDP guarantee of the form
\begin{align} \label{eq:epsilon-is-dp-lloyd}
    \psi(\x) = \log\left( 1 + \left( \exp \left(\frac{a + t \|\x\|_2}{b + s \|\x\|_2^2} \right) - 1 \right) \left( b + s \|\x\|_2^2 \right) \right).
\end{align}

In order to derive an $\epsilon$-DP guarantee, we need to bound $\psi(\x)$ over the domain of $\x \in \Bc_2(r)$.
We observe that $\psi(\x)$ depends on $\x$ only through $\|\x\|_2$.
Therefore, we can bound $\psi(\x)$ numerically by maximizing it over the domain $\|\x\|_2 \in [0, r]$ via grid search.
The maximum always exists because $b$ and $s$ are strictly positive.

\section{Experiments}%
\label{sec:experiments}

We now evaluate our proposed sampling approaches (coreset-based and privacy-constrained) on the task of \mbox{$k$-means} clustering where we are interested in three objectives: privacy, efficiency, and accuracy.
The point of the experiments is to investigate whether our proposed sampling strategies lead to improvements in terms of the three objectives when compared to uniform sampling.
Our measure for efficiency is the subsample size~$m$ produced by the sampling strategy.
This is because Lloyd's algorithm scales linearly in $m$ (for fixed $k$ and~$T$) and the computing time of the sampling itself is negligible in comparison.
We measure accuracy via the $k$-means objective evaluated on the full data set and privacy as the $\epsilon$-DP guarantee of the sampled mechanism.
\looseness=-1

\myparagraph{Data.}
We use the following eight real-world data sets:
\emph{Covertype} \citep{blackard1999comparative} ($n=\num{581012}$, $d=54$),
\emph{FMA}\footnote{\url{https://github.com/mdeff/fma}} \citep{fma_dataset} ($n=\num{106574}$, $d=518$),
\emph{Ijcnn1}\footnote{\url{https://www.csie.ntu.edu.tw/~cjlin/libsvmtools/datasets/}} \citep{chang2001ijcnn} ($n=\num{49990}$, $d=22$),
\emph{KDD-Protein}\footnote{\url{http://osmot.cs.cornell.edu/kddcup/datasets.html}} ($n=\num{145751}$, $d=74$),
\emph{MiniBooNE} \citep{dua2019uci} ($n=\num{130064}$, $d=50$),
\emph{Pose}\footnote{\url{http://vision.imar.ro/human3.6m/challenge\_open.php}} \citep{IonescuSminchisescu11,h36m_pami} ($n=\num{35832}$, $d=48$),
\emph{RNA} \citep{uzilov2006detection} ($n=\num{488565}$, $d=8$), and
\emph{Song} \citep{Bertin-Mahieux2011} ($n=\num{515345}$, $d=90$).

We pre-process the data sets to ensure each data point has bounded $\ell_2$-norm.
Following the common approach in differential privacy to bound contributions at a quantile \citep{abadi2016,geyer2017,amin2019}, we set the $\ell_2$ cut-off point $r$ to the 97.5 percentile and discard the points whose norm exceeds~$r$.
Moreover, we center each data set since this is a prerequisite for the coreset-based sampling distribution, see \Cref{sec:dp-kmeans}.
\looseness=-1

\myparagraph{Setup.}
The specific task we consider is $k$-means for which we use the weighted version of DP-Lloyd.
We fix the number of iterations of DP-Lloyd to $T=10$ and the number of clusters to $k=25$.
The scale parameters are set to $\betas = \sqrt{\frac{Tr}{B}} \sqrt[3]{\frac{d}{2 \rho}}$ and $\betac = \sqrt[3]{4d \rho^2} \betas$, where $\rho = 0.225$ as suggested by \citet{su2016differentially}.
Here, $B$ is a constant controlling the noise scales that we select to achieve a specific target epsilon $\epsilon^\star \in \{0.5, 1, 3, 10, 30, 100, 300, 1000.0\}$ for a given (expected) subsample size $m$ and vice versa.

We evaluate the following three different importance samplers and use various sample sizes, \ie, $m \in [3000, 75000]$, depending on the data set.
For the coreset-based (\textbf{core}) sampling, the sampling distribution is $q_\text{core}(\x) = \lambda \frac{m}{n} + (1 - \lambda) \frac{m \|\x\|_2^2}{n \tilde{\x}}$, where we set $\lambda=\frac12$.
The uniform (\textbf{unif}) sampling uses $q_\text{unif}(\x) = m/n$.
Note that it is the same $q$ as in \textbf{core} but with $\lambda = 1$.
For the privacy-constrained (\textbf{opt}) sampling, we compute $q_\text{priv}(\x)$ numerically via Algorithm~\ref{alg:privacy-optimal-weights} for the target $\epsilon^\star = \left( r/\betas + 1/\betac \right)T$ using the strong convexity constant from Equation~\eqref{eq:lloyd-strong-convexity}.

Due to the stochasticity in the subsampling process and the noises within DP-Lloyd, we repeat each experiment 50~times with different seeds and report on median performances.
In addition, we depict the 75\% and 25\% quartiles.
Note that the weighted version of DP-Lloyd is only used when learning the cluster centers, \ie, not for evaluation.
Furthermore, we initialize the cluster centers as $k$ random data points.
For all sampling-based approaches, we re-compute the objective function (cost) on all data after running DP-Lloyd on the sampled subset and report on the total cost scaled by the number of data points $n^{-1}$.

The code\footnote{\url{https://github.com/smair/personalized-privacy-amplification-via-importance-sampling}.} is implemented in Python using numpy \citep{harris2020array}.
\Cref{alg:privacy-optimal-weights} is implemented in Rust.
All experiments run on a dual AMD Epyc machine with 2 $\times$ 64 cores with 2.25~GHz and 2~TiB of memory.

\begin{figure}[t]
    \centering
    \includegraphics[width=\textwidth]{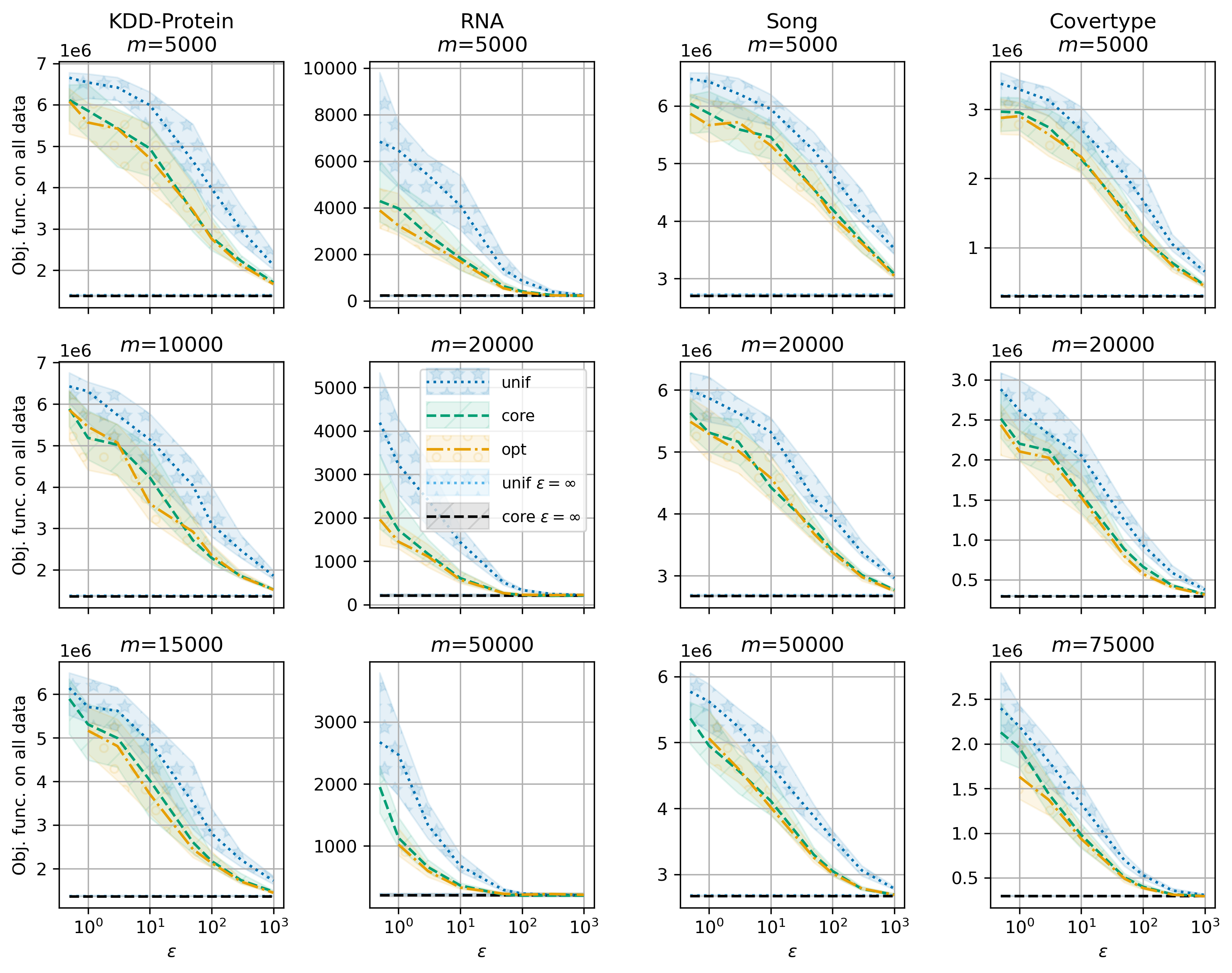}
    \caption{The trade-off between the privacy parameter $\epsilon$ and the total cost of DP-Lloyd on KDD-Protein, RNA, Song, and Covertype data. Non-private counterparts ($\epsilon=\infty$) are shown for reference. Lower is better on both axes. The $\epsilon$-axis is in log-scale.}
    \label{fig:result_main_m}
\end{figure}

\begin{figure}[t]
    \centering
    \includegraphics[width=\textwidth]{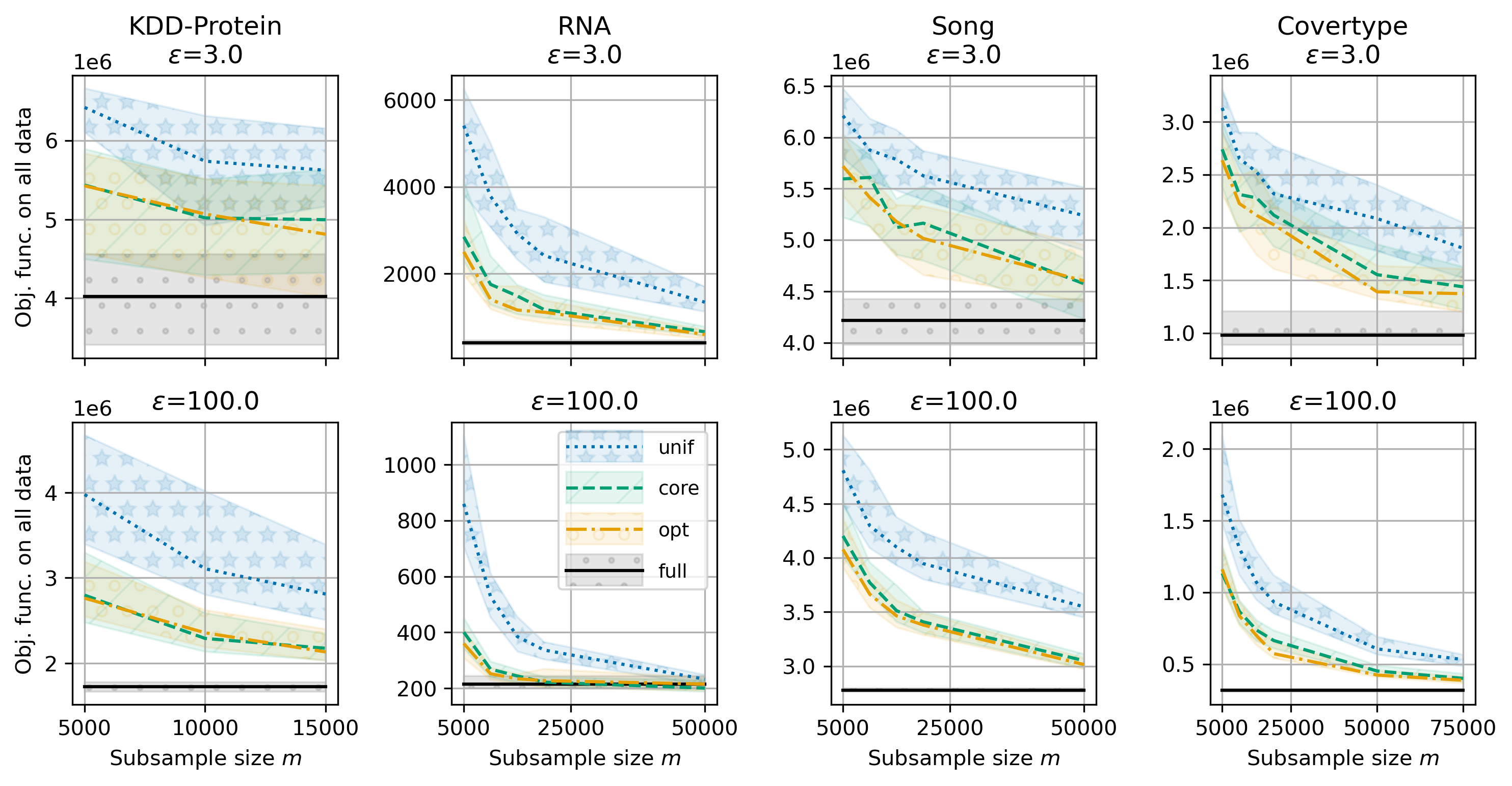}
    \caption{The trade-off between subsample size $m$ and the total cost of DP-Lloyd for fixed privacy parameters $\epsilon\in\{3, 100\}$ on KDD-Protein, RNA, Song, and Covertype data. The performance of DP-Lloyd on the full data is shown for reference. Lower is better on both axes.}
    \label{fig:result_main_eps}
\end{figure}

\myparagraph{Results.}
We first evaluate the trade-off between the privacy parameter $\epsilon$ and the total cost of DP-Lloyd.
Figure~\ref{fig:result_main_m} depicts the results for KDD-Protein, RNA, Song, and Covertype.
Within the figure, each column corresponds to a data set and the rows show different subsampling sizes $m$.
The curves are obtained by subsampling $m$ data points and training DP-Lloyd on the obtained subset for which we vary $B$ and re-compute the total cost, \ie, the cost evaluated on all data.
Thus, we see the performance DP-Lloyd as a function of the privacy parameter~$\epsilon$.
Note that the $x$-axis is in log-scale, shared column-wise, and that the optimal location is the bottom left since it yields a low cost and a low privacy parameter.
Additional results on Ijcnn1, Pose, MiniBooNE, and FMA are shown in Figure~\ref{fig:result_appendix_m} in Appendix~\ref{app:sec:additional_results}.
The performance of a uniform subsample of the data set of size $m$ (\textbf{unif}, blue line) always yields the worst results.
Our first proposed subsampling strategy \textbf{core} (green line) consistently outperforms \textbf{unif} as it yields a better cost-$\epsilon$-parameter trade-off.
The other proposed sampler \textbf{opt} (yellow line) usually performs on-par or slightly better than \textbf{core}.
Note that the privacy-constrained sampling strategy \textbf{opt} only optimizes two out of three objectives: privacy and efficiency but not accuracy.
Although privacy and accuracy are often well aligned in this problem, there is no formal guarantee on the accuracy.
In particular, the degree of alignment between privacy and accuracy depends on the specific mechanism and data set used.
This is why its cost is not necessarily smaller than for \textbf{core}.
Additionally, we show the non-private versions of \textbf{unif} and \textbf{core} (mostly overlapping) that correspond to the choice of $\epsilon=\infty$ as a reference.
Note that the performance converges to the non-private version as we increase $\epsilon$.
Note also that, as $\epsilon \to 0$, the cost approaches but stays below that of performing no learning at all.
This can be seen by comparison to the $\tilde{x}$ column of Table 1: when the cluster centers are initialized to zero, the cost is precisely $\tilde{x}$, i.e., the average squared $\ell_2$ norm.
The main take-away is that both our proposed sampling strategies consistently yield a better trade-off between the privacy parameter $\epsilon$ and the total cost of (DP-)$k$-means than \textbf{unif}.

Next, we evaluate the performance of the subsampling strategies as functions of the subsample size $m$, \ie, the trade-off between sample size $m$ and the total cost of DP-Lloyd.
For this scenario, we fix the privacy budget to either $\epsilon=3$ or $\epsilon=100$.
Figure~\ref{fig:result_main_eps} depicts the results for the data sets KDD-Protein, RNA, Song, and Covertype.
As expected, the total cost decreases as we increase the subsample size $m$.
Moreover, we can see that \textbf{unif} (blue line) -- once again -- performs consistently worst.
In contrast, the coreset-inspired sampling \textbf{core} (green line) and sampling using privacy-constrained weights \textbf{opt} (yellow line) yields lower total cost for the same $m$ and $\epsilon$.
For reference, we also include the performance of DP-Lloyd using the same privacy parameter $\epsilon$ but on all data, \ie, without any subsampling, as a black line (\textbf{full}).
Unsurprisingly, the total cost of the subsampling methods approaches the total cost of \textbf{full} as $m$ approaches $n$.
Note that this happens faster for our subsampling methods than for \textbf{unif}.

\begin{figure}[t]
    \centering
    \includegraphics[width=.425\textwidth]{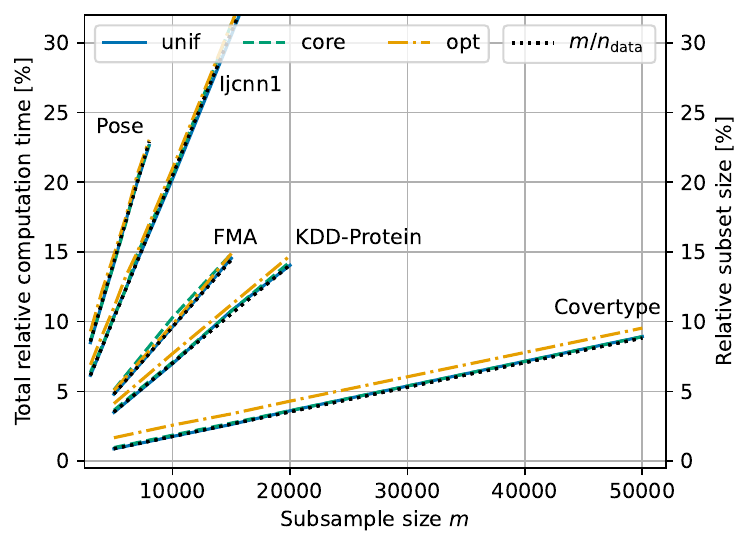}
    \hfill
    \includegraphics[width=.425\textwidth]{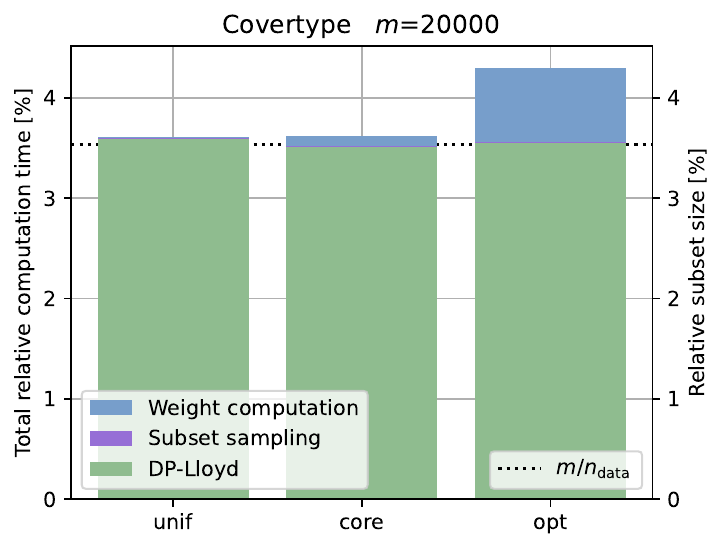}
    \caption{\textbf{Left}: Total relative computation times (left $y$-axis) and relative subset sizes (right $y$-axis) as functions of subsample sizes~$m$ for five data sets. \textbf{Right}: Relative computation times decomposed in weight computation, sampling, and DP-Lloyd for the Covertype data and a subsample size of $m$=\num{20000}.}
    \label{fig:timing}
\end{figure}

Lastly, we measure the computation times of the different sampling strategies to confirm that they are short relative to the computation time of DP-Lloyd.
This supplements the iteration complexity analysis from~\Cref{pro:amplification-numerical}.
Figure~\ref{fig:timing} (left) shows the total relative computation times (left $y$-axis), \ie, the time to (i)~compute the weights, (ii) subsample the data set, and (iii) compute DP-Lloyd for various subset sizes $m$, relative to computing DP-Lloyd on all data using $\epsilon=3.0$ for five data sets.
Each line in the figure refers to a combination of data set and sampling strategy.
In addition, we show the fraction $m/n_\text{data}$ (right $y$-axis) per data set as a black dotted line.
To improve readability, we omit MiniBooNE, Song and RNA from the plot because they overlap with the lines of other data sets.
For completeness, we show the numbers in tabular form in Appendix~\ref{app:sec:timing} for all data sets.
A sampling strategy can be considered efficient if its relative vertical offset from the black dotted line is small. 
This is the case for all data sets and sampling strategies.
Additionally, Figure~\ref{fig:timing} (right) depicts the decomposed computation times for the Covertype data set and a subsample size of $m$=\num{20000}.
We can see that the time needed to subsample the data set (violet) is negligible and that the time DP-Lloyd takes (green) is almost static.
For this data set, the weight computation (blue) for the \textbf{opt} weight computation takes significantly more time than for \textbf{unif} and \textbf{core}.
However, the difference is small relative to the runtime of DP-Lloyd on all data.
Specifically, a subset size of $m$=\num{20000} amounts to approximately 3.5\% of the Covertype data set.
Using the \textbf{opt} subsampling strategy takes slightly more than 4\% of the time DP-Lloyd takes on the full data set.

\section{Related Work}
\label{sec:related-work}
The notion of personalized differential privacy is introduced by \citet{jorgensen2015} and \citet{ebadi2015}, as well as by \citet{alaggan2016} under the name of heterogeneous differential privacy.
In \citet{jorgensen2015}, the privacy parameter is associated with a \emph{user}, while it is associated with the \emph{value} of a data point in the work of \citet{ebadi2015} and ours.
\citet{jorgensen2015} achieve personalized DP by subsampling the data with heterogeneous probabilities, but without accounting for the bias introduced by the heterogeneity.
Moreover, this privacy analysis is loose as it does not exploit the inherent heterogeneity of the original mechanism's privacy guarantee.

Recently, there has been renewed interest in PDP due to its connection to individual privacy accounting and fully adaptive composition \citep{feldman2021,Koskela2023,yu2023}.
In this context, \emph{privacy filters} have been proposed as a means to answer more queries about a data set by reducing a PDP guarantee to its worst-case counterpart.
Analogously, our privacy-constrained importance sampling distribution can be used to subsample a data set by reducing a PDP guarantee to its worst-case counterpart.

The idea of using importance sampling for differential privacy is not entirely new.
\citet{wei2022dpis} propose a differentially private importance sampler for the mini-batch selection in differentially private stochastic gradient descent.
Their sampling distribution resembles the \emph{variance-optimal} distribution we display in \Cref{fig:lsm-plot}.
The major drawback of this distribution is its intractability---it requires us to know the quantity we want to compute in the first place.
Note that our privacy-constrained distribution is very close to the variance-optimal distribution while being efficient to compute.
Moreover, their privacy analysis is restricted to the Gaussian mechanism and does not generalize to general DP or Rényi-DP mechanisms, because it is based on \citet{Mironov2019}'s analysis of the subsampled Gaussian mechanism.

The first differentially private version of $k$-means is introduced by \citet{blum2005practical}.
They propose the Sub-Linear Queries (SuLQ) framework, which, for \mbox{$k$-means}, uses a noisy (Gaussian) estimate of the number of points per cluster and a noisy sum of distances.
This is also reported by \citet{su2016differentially,su2017differentially} which extend the analysis.
Other variants of differentially private $k$-means were proposed, \eg, by \citet{balcan2017differentially,huang2018optimal,stemmer2018differentially,shechner2020private,ghazi2020differentially,nguyen2021differentially,cohen2022near}.

\section{Conclusion}%
\label{sec:conclusion}
We introduced and analyzed Poisson importance sampling for subsampling differentially private mechanisms.
We observed that, for typical mechanisms, privacy is well aligned with utility and at odds with sample size.
Based on this insight, we proposed two importance sampling distributions: one that navigates the trade-off between privacy and sample size and another based on coresets which have strong utility guarantees.
The empirical results suggest that both strategies have stronger privacy and utility than uniform sampling at any given sample size.

Promising directions for future work include extensions to $(\epsilon, \delta)$-DP or Rényi-DP as well as establishing formal utility guarantees via coresets.
For the latter, recent work on confidence intervals for stratified sampling \citep{lin2024} might serve as a starting point.
Moreover, Poisson importance sampling is directly applicable to a streaming setting, because it considers each data point separately.
This provides an opportunity to improve efficiency further.
In federated learning, Poisson importance sampling might be used to improve client selection \citep{zhang2024}.
Additionally, its connection to fairness could be explored where importance sampling can be used to mitigate bias \citep{WangCH23}.

 \section*{Acknowledgements}
 This work was partially supported by the Wallenberg AI, Autonomous Systems and Software Program (WASP) funded by the Knut and Alice Wallenberg Foundation.
 This research has been carried out as part of the Vinnova Competence Center for Trustworthy Edge Computing Systems and Applications at KTH Royal Institute of Technology.

\bibliography{main}
\bibliographystyle{tmlr}

\clearpage

\appendix

\section*{Appendix}

\begin{itemize}
\item Section~\ref{app:sec:SamplingImprovesPrivacy} discusses sufficient conditions under which importance sampling does or does not improve privacy.
\item Section~\ref{app:sec:additional_results} includes experimental results on additional data sets.
\item Section~\ref{app:sec:timing} contains details on the computation time measurements.
\item Section~\ref{app:sec:proofs} contains the proofs for \Cref{thm:amplification}, \Cref{pro:amplification-numerical} and \Cref{pro:dp-lloyd-distinguishability-profile}.
Moreover, it contains Algorithm~\ref{alg:weighted-dp-lloyd} which describes the weighted version of DP Lloyd.
\end{itemize}

\section{Can sampling improve privacy?}\label{app:sec:SamplingImprovesPrivacy}
We have pointed out in \Cref{sec:privacy-amplification} that, in importance sampling, the privacy loss is not necessarily reduced by decreasing the sampling probability.
In this section, we make this statement more precise and provide formal conditions on the PDP profile under which weighted sampling does or does not improve privacy when reweighting is accounted for.

First, note that any mechanism $\M$ that simply ignores the weights $\{w_i\}_{i=1}^n$ has a weighted PDP profile $\epsilon(w, \x)$ that is constant in $w$.
In this case, Theorem~\ref{thm:amplification} reduces to the established amplification by subsampling result (Proposition~\ref{prop:privacy-amplification}), which indeed implies that $\widehat{\M}$ satisfies a stronger privacy guarantee than $\M$.
This is possible because such a mechanism is not invariant under splitting a weighted point $(w, \x)$ into $w$ unweighted points $\{ (1, \x) \}_{i=1}^w$.
The following proposition provides a more general sufficient condition under which $\widehat{\M}$ satisfies a stronger privacy guarantee than $\M$.
\begin{restatable}[]{proposition}{BetterThanFull}
    \label{pro:better-than-full}
    Let $\M: \Xc^* \times [1, \infty) \to \Yc$ be an $\epsilon$-PDP mechanism,
    $\widetilde{\M}$ be its unweighted counterpart defined as $\widetilde{\M}(\{ \x_i \}_{i=1}^n) = \M(\{ (1,\ \x_i) \}_{i=1}^n)$, and $\tilde{\epsilon}(\x) = \epsilon(1, \x)$ be the PDP profile of $\widetilde{\M}$.
    Furthermore, let $\epsilon$ be differentiable in $w$ and $\epsilon'(w, \x) = \partial \epsilon(w, \x) / \partial w$.
    Then, there is a Poisson importance sampler $S(\cdot)$ for which the following holds.
    Let $\psi$ be the PDP profile of $\M \circ S$ implied by Theorem~\ref{thm:amplification}.
    For any $\x \in \Xc$, if $\epsilon'(1, \x) < 1 - e^{-\epsilon(1, \x)}$, then $\psi(\x) < \tilde{\epsilon}(\x)$.
\end{restatable}
\begin{proof}
    Let $\x \in \Xc$ be any data point for which $\epsilon'(1, \x) < 1 - e^{-\epsilon(1, \x)}$ holds.
    We treat $\psi(\x)$ as a function of the selection probability $q(\x)$ and show that it is increasing at $q(\x) = 1$.
    We write $\psi_{q(\x)}(\x)$ to make the dependence on $q(\x)$ explicit and define $\omega(w) = \exp(\psi_{1/w}(\x)) - 1$.
    We have
    \begin{align*}
        \frac{\mathrm{d} \omega(1)}{\mathrm{d} w}
        &= e^{\epsilon(1, \x)} \left(\epsilon'(1, \x) - 1\right) + 1  \\
        &< e^{\epsilon(1, \x)} \left( \left( 1 - e^{- \epsilon(1, \x)} \right) - 1\right) + 1 \\
        &= 0,
    \end{align*}
    and, hence, $\mathrm{d} \psi_{q(\x)}(\x) / \mathrm{d} q(\x) > 0$ at $q(\x) = 1$.
    As a result, there is a probability $q(\x) < 1$ such that $\psi_{q(\x)}(\x) < \psi_1(\x) = \tilde{\epsilon}(\x)$.
\end{proof}
Consequently, if the condition $\epsilon'(1, \x) < 1 - e^{-\epsilon(1, \x)}$ holds in a neighborhood around the maximizer $\x^\star = \argmax_{\x \in \Xc} \tilde{\epsilon}(\x)$, then $\widehat{\M}$ satisfies DP with a strictly smaller privacy parameter than~$\M$.

However, for the following important class of mechanisms, $\widehat{\M}$ cannot satisfy a stronger privacy guarantee than $\M$.
\begin{restatable}[]{proposition}{WorseThanFull}
    \label{pro:worse-than-full}
    Let $\M$, $\widetilde{\M}$, $\epsilon$, $\epsilon'$, and $\tilde{\epsilon}$ be as in Proposition~\ref{pro:better-than-full}, $S(\cdot)$ be any Poisson importance sampler, $\psi$ be the PDP profile of $\M \circ S$ implied by Theorem~\ref{thm:amplification}, and $\x \in \Xc$.
    If $\epsilon(w, \x) \leq w \epsilon'(w, \x)$ for all $w \geq 1$, then $\psi(\x) \geq \tilde{\epsilon}(\x)$.
\end{restatable}
\begin{proof}
    As in Proposition \ref{pro:better-than-full}, the core idea is to treat $\psi(\x)$ as a function of the selection probability $q(\x)$.
    We show that $\psi(\x)$ is non-increasing in $q(\x)$ if the condition $\epsilon(w, \x) \leq w \epsilon'(w, \x)$ is satisfied for all $w$.

    Let $q(\x)$ be the selection probability of $\x$ under $S$.
    We write $\psi_q(\x)$ to make the dependence on $q(\x)$ explicit.
    Let $\x \in \Xc$ be arbitrary but fixed and define $\omega(w) = \exp(\psi_{1/w}(\x)) - 1$.
    We have
    \begin{align*}
        \frac{\mathrm{d} \omega(w)}{\mathrm{d} w}
        &= \frac{e^{\epsilon(w, \x)} \left(w \epsilon'(w, \x) - 1\right) + 1 }{w^2} \\
        &\geq \frac{e^{\epsilon(w, \x)} \left(\epsilon(w, \x) - 1\right) + 1 }{w^2} \\
        &\geq \frac{\left( \epsilon(w, \x) + 1 \right) \left( \epsilon(w, \x) - 1 \right) + 1}{w^2} \\
        &\geq \frac{\epsilon(w, \x)^2}{w^2} \\
        &\geq 0,
    \end{align*}
    where we used the assumption $\epsilon(w, \x) \leq w \epsilon'(w, \x)$ in the first inequality and the fact that $e^z \geq z + 1$ for all $z \geq 0$ in the second inequality.
    As a result, $\omega$ is minimized at $w=1$ and, hence, $\psi_q$ is minimized at $q(\x) = 1$.
    To complete the proof, observe that $\psi_q(\x) = \tilde{\epsilon}(\x)$ for $q(\x) = 1$.
\end{proof}
For instance, the condition $\epsilon(w, \x) \leq w \epsilon'(w, \x)$ is satisfied everywhere by any profile of the form $\epsilon(w, \x) = f(\x) w^p$, where $p \geq 1$ and $f$ is any non-negative function that does not depend on $w$.
This includes any mechanism that is invariant under splitting a weighted point $(w, \x)$ into $w$ unweighted points $\{ (1, \x) \}_{i=1}^w$ since group privacy implies a linear PDP profile in this case.
All mechanisms considered in this paper satisfy this invariance.

It is important to note that, even in a case where we cannot hope to improve upon the original mechanism, it is still possible to obtain a stronger privacy amplification than with uniform subsampling at the same sampling rate.
Indeed, the uniform distribution is never optimal unless the PDP profile of the original mechanism is constant.
\looseness=-1

\clearpage

\section{Results on the Remaining Data Sets}\label{app:sec:additional_results}

\begin{table}[h]
    \centering
    \caption{Information on the data sets. Remember that $n,d,r$ and $\tilde{x}$ denote the sample size, dimensionality, largest $\ell_2$-norm, and average squared $\ell_2$-norm, respectively.}
    \vspace{1mm}
    \label{tab:data}
    \begin{tabular}{llrrrr}
    \toprule
    & Data Set & $n$ (after outlier removal) & $d$ & $r$ & $\tilde{x}$ \\
    \midrule
    \multirow{4}{*}{\rotatebox[origin=c]{90}{\small Main Paper}}
    & KDD-Protein & \num{142107} & 74 & 8020.65 & 7031359.55 \\ 
    & RNA & \num{476350} & 8 & 420.67 & 24902.47 \\ 
    & Song & \num{502461} & 90 & 7507.16 & 6769375.49 \\ 
    & Covertype & \num{566486} & 54 & 4489.17 & 3839798.28 \\ 
    \midrule
    \multirow{4}{*}{\rotatebox[origin=c]{90}{\small Appendix}}
    & Ijcnn1 & \num{48740} & 22 & 1.51 & 1.26 \\ 
    & Pose & \num{34936} & 48 & 2250.64 & 1249739.50 \\ 
    & MiniBooNE & \num{126812} & 50 & 2213.05 & 295947.13 \\ 
    & FMA & \num{103909} & 518 & 6020.23 & 5838871.88 \\ 
    \bottomrule
    \end{tabular}
\end{table}

Figure~\ref{fig:result_appendix_m} depicts the results for the remaining data sets Ijcnn1, Pose, MiniBooNE, and FMA.
Note that those data sets are smaller in terms of the number of data points as the data sets shown in Figure~\ref{fig:result_main_m}, see Table~\ref{tab:data}.
As seen in Figure~\ref{fig:result_appendix_m}, we can observe the smallest difference among the subsampling strategies across all data sets occurs for Ijcnn1, especially for the $m$=\num{15000} case.
All subsampling strategies behave alike, except for the $m$=\num{10000} case where \textbf{unif} is worse than \textbf{core} and \textbf{opt}.
On Pose, MiniBooNE, and FMA, our subsampling strategies consistently outperform \textbf{unif}.
In addition, Figure~\ref{fig:result_appendix_eps} depicts the cost-sample-size-$m$ ratios for the remaining data sets Ijcnn1, Pose, MiniBooNE, and FMA for fixed privacy budgets of $\epsilon=3$ and $\epsilon=100$.
Once again, the only data set in which no clear improvement is visible is Ijcnn1.
    
    \begin{figure}[h]
        \centering
        \includegraphics[width=.9\textwidth]{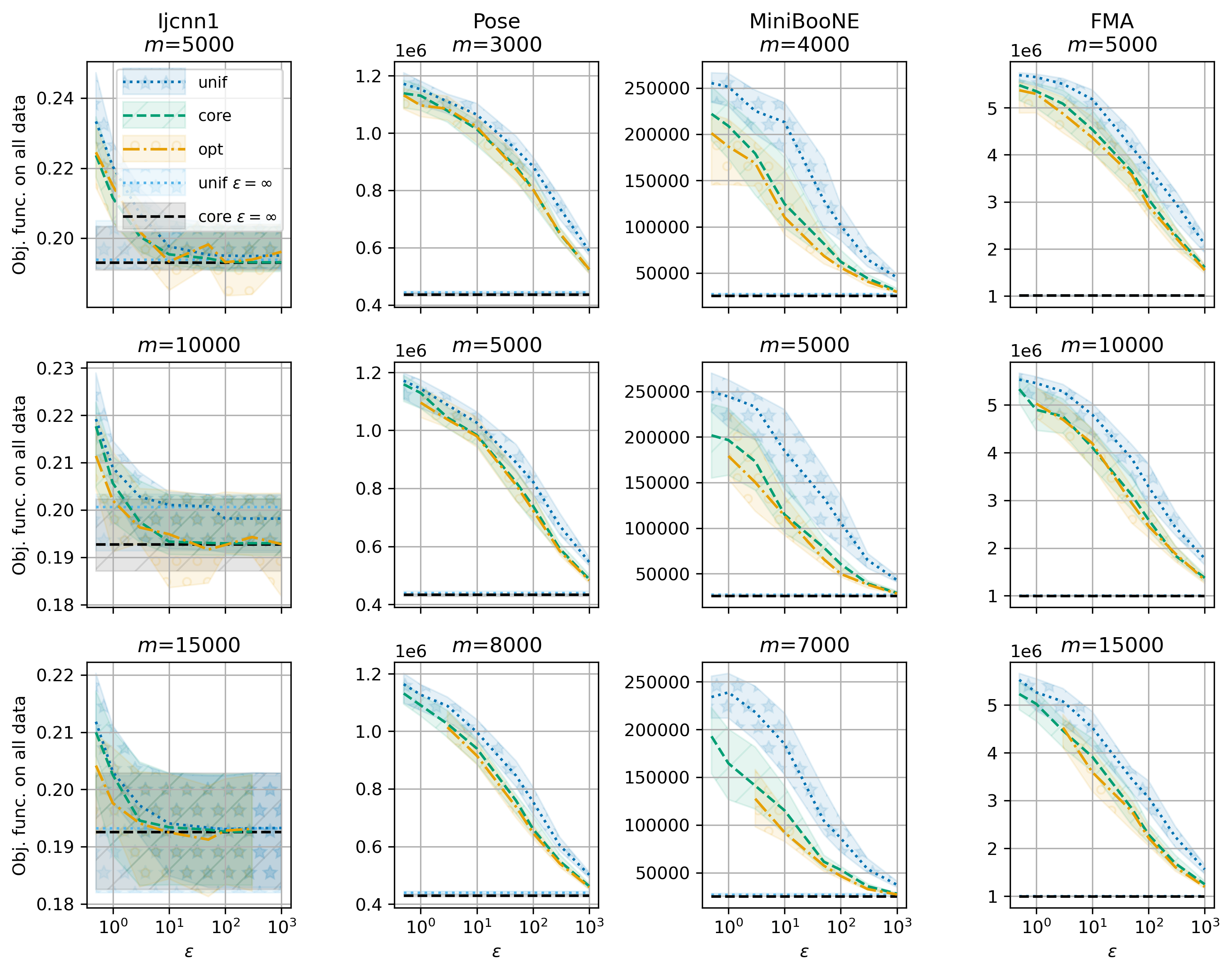}
        \caption{The trade-off between the privacy parameter $\epsilon$ and the total cost of DP-Lloyd on Ijcnn1, Pose, MiniBooNE, and FMA data. Non-private counterparts ($\epsilon=\infty$) are shown for reference. Lower is better on both axes. The $\epsilon$-axis is in log-scale.}
        \label{fig:result_appendix_m}
    \end{figure}
    
    \begin{figure}[h]
        \centering
        \includegraphics[width=.9\textwidth]{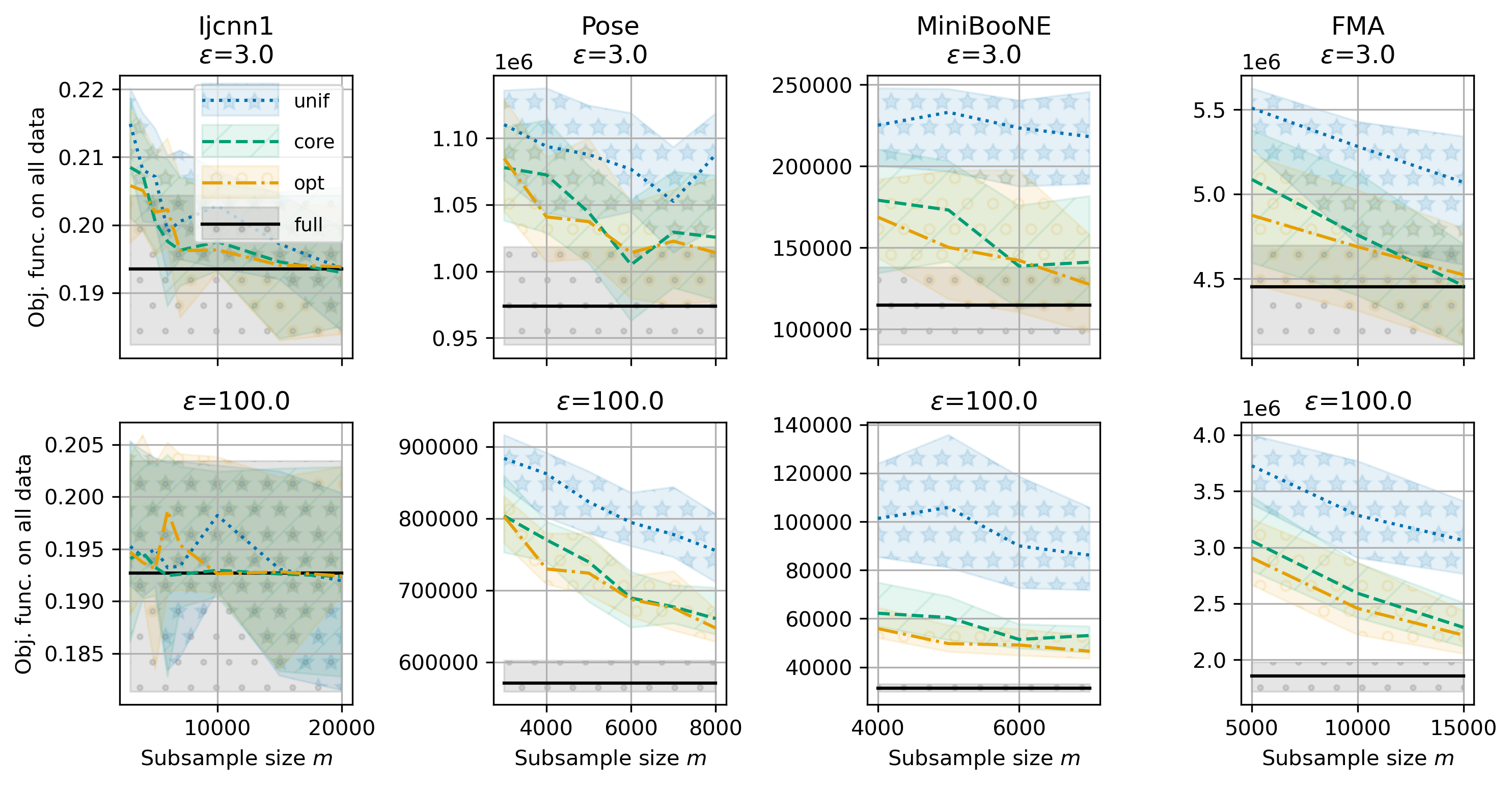}
        \caption{The trade-off between sample size $m$ and the total cost of DP-Lloyd for fixed privacy parameters $\epsilon\in\{3, 100\}$ on Ijcnn1, Pose, MiniBooNE, and FMA. The performance of DP-Lloyd on the full data is shown for reference. Lower is better on both axes.}
        \label{fig:result_appendix_eps}
    \end{figure}

\clearpage

\section{Detailed Timing Results}\label{app:sec:timing}

\begin{table}[h]
\centering
\caption{The exact numbers of \Cref{fig:timing} for the first four data sets. Times are in seconds.}
\vspace{1mm}
\label{tab:timing1}
\resizebox{.75\textwidth}{!}{
\begin{tabular}{ccrrrrrrr}
\toprule
Data Set & Sampling & $m$ & $t_\text{weights}$ & $t_\text{sampling}$ & $t_\text{DP-Lloyd}$ & $t_\text{total}$ & rel.\ tot.\ time [\%] & $\frac{m}{n}$ [\%] \\ 
\midrule
KDD-Protein & full & $n$ & - & - & 15.69 & 15.69 & 100 & 100 \\ 
KDD-Protein & unif & \num{5000} & 0.0005 & 0.0026 & 0.54 & 0.55 & 3.49 & 3.52 \\ 
KDD-Protein & unif & \num{10000} & 0.0004 & 0.0029 & 1.09 & 1.10 & 6.99 & 7.04 \\ 
KDD-Protein & unif & \num{15000} & 0.0004 & 0.0035 & 1.69 & 1.69 & 10.79 & 10.56 \\ 
KDD-Protein & unif & \num{20000} & 0.0004 & 0.0042 & 2.20 & 2.20 & 14.04 & 14.07 \\ 
KDD-Protein & core & \num{5000} & 0.0198 & 0.0025 & 0.55 & 0.57 & 3.64 & 3.52 \\ 
KDD-Protein & core & \num{10000} & 0.0187 & 0.0030 & 1.09 & 1.11 & 7.10 & 7.04 \\ 
KDD-Protein & core & \num{15000} & 0.0182 & 0.0034 & 1.66 & 1.69 & 10.75 & 10.56 \\ 
KDD-Protein & core & \num{20000} & 0.0185 & 0.0040 & 2.22 & 2.24 & 14.29 & 14.07 \\ 
KDD-Protein & opt & \num{5000} & 0.1112 & 0.0023 & 0.53 & 0.65 & 4.12 & 3.52 \\ 
KDD-Protein & opt & \num{10000} & 0.1076 & 0.0031 & 1.10 & 1.21 & 7.72 & 7.04 \\ 
KDD-Protein & opt & \num{15000} & 0.1134 & 0.0034 & 1.64 & 1.76 & 11.22 & 10.56 \\ 
KDD-Protein & opt & \num{20000} & 0.1051 & 0.0039 & 2.20 & 2.31 & 14.73 & 14.07 \\ 
\midrule
RNA & full & $n$ & - & - & 42.24 & 42.24 & 100 & 100 \\ 
RNA & unif & \num{5000} & 0.0004 & 0.0051 & 0.43 & 0.44 & 1.04 & 1.05 \\ 
RNA & unif & \num{10000} & 0.0006 & 0.0056 & 0.88 & 0.88 & 2.09 & 2.10 \\ 
RNA & unif & \num{15000} & 0.0005 & 0.0060 & 1.32 & 1.32 & 3.14 & 3.15 \\ 
RNA & unif & \num{20000} & 0.0006 & 0.0064 & 1.73 & 1.74 & 4.12 & 4.20 \\ 
RNA & unif & \num{50000} & 0.0005 & 0.0082 & 4.35 & 4.36 & 10.33 & 10.50 \\ 
RNA & core & \num{5000} & 0.0153 & 0.0051 & 0.44 & 0.46 & 1.09 & 1.05 \\ 
RNA & core & \num{10000} & 0.0149 & 0.0057 & 0.87 & 0.89 & 2.12 & 2.10 \\ 
RNA & core & \num{15000} & 0.0136 & 0.0061 & 1.33 & 1.35 & 3.19 & 3.15 \\ 
RNA & core & \num{20000} & 0.0153 & 0.0066 & 1.74 & 1.76 & 4.18 & 4.20 \\ 
RNA & core & \num{50000} & 0.0160 & 0.0080 & 4.34 & 4.37 & 10.34 & 10.50 \\ 
RNA & opt & \num{5000} & 0.3956 & 0.0051 & 0.44 & 0.84 & 1.98 & 1.05 \\ 
RNA & opt & \num{10000} & 0.3865 & 0.0055 & 0.88 & 1.27 & 3.02 & 2.10 \\ 
RNA & opt & \num{15000} & 0.3817 & 0.0061 & 1.30 & 1.69 & 3.99 & 3.15 \\ 
RNA & opt & \num{20000} & 0.3776 & 0.0066 & 1.73 & 2.12 & 5.01 & 4.20 \\ 
RNA & opt & \num{50000} & 0.3576 & 0.0085 & 4.40 & 4.77 & 11.28 & 10.50 \\ 
\midrule
Song & full & $n$ & - & - & 59.47 & 59.47 & 100 & 100 \\ 
Song & unif & \num{5000} & 0.0016 & 0.0074 & 0.56 & 0.57 & 0.96 & 1.00 \\ 
Song & unif & \num{10000} & 0.0018 & 0.0077 & 1.14 & 1.15 & 1.94 & 1.99 \\ 
Song & unif & \num{15000} & 0.0017 & 0.0077 & 1.76 & 1.77 & 2.97 & 2.99 \\ 
Song & unif & \num{20000} & 0.0015 & 0.0084 & 2.29 & 2.30 & 3.86 & 3.98 \\ 
Song & unif & \num{50000} & 0.0017 & 0.0146 & 5.75 & 5.77 & 9.70 & 9.95 \\ 
Song & core & \num{5000} & 0.0748 & 0.0067 & 0.56 & 0.64 & 1.08 & 1.00 \\ 
Song & core & \num{10000} & 0.0774 & 0.0076 & 1.15 & 1.24 & 2.08 & 1.99 \\ 
Song & core & \num{15000} & 0.0765 & 0.0084 & 1.74 & 1.82 & 3.06 & 2.99 \\ 
Song & core & \num{20000} & 0.0750 & 0.0091 & 2.31 & 2.39 & 4.03 & 3.98 \\ 
Song & core & \num{50000} & 0.0780 & 0.0145 & 5.80 & 5.89 & 9.90 & 9.95 \\ 
Song & opt & \num{5000} & 0.4353 & 0.0063 & 0.57 & 1.01 & 1.69 & 1.00 \\ 
Song & opt & \num{10000} & 0.4020 & 0.0075 & 1.16 & 1.57 & 2.64 & 1.99 \\ 
Song & opt & \num{15000} & 0.4073 & 0.0083 & 1.73 & 2.15 & 3.61 & 2.99 \\ 
Song & opt & \num{20000} & 0.3945 & 0.0088 & 2.31 & 2.71 & 4.56 & 3.98 \\ 
Song & opt & \num{50000} & 0.3789 & 0.0146 & 5.85 & 6.25 & 10.51 & 9.95 \\ 
\midrule
Covertype & full & $n$ & - & - & 59.91 & 59.91 & 100 & 100 \\ 
Covertype & unif & \num{5000} & 0.0013 & 0.0062 & 0.52 & 0.53 & 0.88 & 0.88 \\ 
Covertype & unif & \num{10000} & 0.0019 & 0.0070 & 1.05 & 1.06 & 1.76 & 1.77 \\ 
Covertype & unif & \num{15000} & 0.0013 & 0.0077 & 1.56 & 1.57 & 2.63 & 2.65 \\ 
Covertype & unif & \num{20000} & 0.0014 & 0.0082 & 2.15 & 2.16 & 3.61 & 3.53 \\ 
Covertype & unif & \num{50000} & 0.0014 & 0.0115 & 5.34 & 5.35 & 8.93 & 8.83 \\ 
Covertype & core & \num{5000} & 0.0558 & 0.0062 & 0.52 & 0.58 & 0.97 & 0.88 \\ 
Covertype & core & \num{10000} & 0.0567 & 0.0068 & 1.05 & 1.11 & 1.86 & 1.77 \\ 
Covertype & core & \num{15000} & 0.0563 & 0.0076 & 1.57 & 1.63 & 2.72 & 2.65 \\ 
Covertype & core & \num{20000} & 0.0569 & 0.0082 & 2.10 & 2.17 & 3.62 & 3.53 \\ 
Covertype & core & \num{50000} & 0.0567 & 0.0115 & 5.28 & 5.34 & 8.92 & 8.83 \\ 
Covertype & opt & \num{5000} & 0.4684 & 0.0063 & 0.53 & 1.00 & 1.67 & 0.88 \\ 
Covertype & opt & \num{10000} & 0.4614 & 0.0069 & 1.07 & 1.54 & 2.57 & 1.77 \\ 
Covertype & opt & \num{15000} & 0.4440 & 0.0077 & 1.58 & 2.03 & 3.38 & 2.65 \\ 
Covertype & opt & \num{20000} & 0.4404 & 0.0082 & 2.13 & 2.57 & 4.30 & 3.53 \\ 
Covertype & opt & \num{50000} & 0.4237 & 0.0115 & 5.28 & 5.71 & 9.54 & 8.83 \\ 
\bottomrule
\end{tabular}
} 
\end{table}

\begin{table}[h]
\centering
\caption{The exact numbers of \Cref{fig:timing} for the last four data sets. Times are in seconds.}
\vspace{1mm}
\label{tab:timing2}
\resizebox{.65\textwidth}{!}{
\begin{tabular}{ccrrrrrrr}
\toprule
Data Set & Sampling & $m$ & $t_\text{weights}$ & $t_\text{sampling}$ & $t_\text{DP-Lloyd}$ & $t_\text{total}$ & rel.\ tot.\ time [\%] & $\frac{m}{n}$ [\%] \\ 
\midrule
Ijcnn1 & full & $n$ & - & - & 4.68 & 4.68 & 100 & 100 \\ 
Ijcnn1 & unif & \num{3000} & 0.0001 & 0.0008 & 0.29 & 0.29 & 6.15 & 6.16 \\ 
Ijcnn1 & unif & \num{4000} & 0.0001 & 0.0008 & 0.38 & 0.38 & 8.15 & 8.21 \\ 
Ijcnn1 & unif & \num{5000} & 0.0001 & 0.0009 & 0.48 & 0.49 & 10.37 & 10.26 \\ 
Ijcnn1 & unif & \num{6000} & 0.0001 & 0.0010 & 0.58 & 0.58 & 12.33 & 12.31 \\ 
Ijcnn1 & unif & \num{7000} & 0.0001 & 0.0011 & 0.67 & 0.67 & 14.38 & 14.36 \\ 
Ijcnn1 & unif & \num{10000} & 0.0001 & 0.0012 & 0.95 & 0.95 & 20.26 & 20.52 \\ 
Ijcnn1 & unif & \num{15000} & 0.0001 & 0.0015 & 1.43 & 1.43 & 30.55 & 30.78 \\ 
Ijcnn1 & unif & \num{20000} & 0.0002 & 0.0017 & 1.92 & 1.92 & 41.13 & 41.03 \\ 
Ijcnn1 & core & \num{3000} & 0.0022 & 0.0008 & 0.30 & 0.30 & 6.37 & 6.16 \\ 
Ijcnn1 & core & \num{4000} & 0.0022 & 0.0009 & 0.38 & 0.39 & 8.27 & 8.21 \\ 
Ijcnn1 & core & \num{5000} & 0.0026 & 0.0009 & 0.48 & 0.48 & 10.33 & 10.26 \\ 
Ijcnn1 & core & \num{6000} & 0.0021 & 0.0010 & 0.57 & 0.58 & 12.31 & 12.31 \\ 
Ijcnn1 & core & \num{7000} & 0.0021 & 0.0013 & 0.69 & 0.70 & 14.86 & 14.36 \\ 
Ijcnn1 & core & \num{10000} & 0.0022 & 0.0012 & 0.96 & 0.96 & 20.49 & 20.52 \\ 
Ijcnn1 & core & \num{15000} & 0.0021 & 0.0015 & 1.46 & 1.47 & 31.36 & 30.78 \\ 
Ijcnn1 & core & \num{20000} & 0.0022 & 0.0017 & 1.93 & 1.94 & 41.40 & 41.03 \\ 
Ijcnn1 & opt & \num{3000} & 0.0347 & 0.0008 & 0.29 & 0.32 & 6.90 & 6.16 \\ 
Ijcnn1 & opt & \num{4000} & 0.0341 & 0.0008 & 0.39 & 0.42 & 9.03 & 8.21 \\ 
Ijcnn1 & opt & \num{5000} & 0.0340 & 0.0009 & 0.48 & 0.52 & 11.02 & 10.26 \\ 
Ijcnn1 & opt & \num{6000} & 0.0338 & 0.0009 & 0.58 & 0.62 & 13.23 & 12.31 \\ 
Ijcnn1 & opt & \num{7000} & 0.0331 & 0.0010 & 0.67 & 0.70 & 15.01 & 14.36 \\ 
Ijcnn1 & opt & \num{10000} & 0.0329 & 0.0012 & 0.95 & 0.98 & 20.99 & 20.52 \\ 
Ijcnn1 & opt & \num{15000} & 0.0318 & 0.0015 & 1.43 & 1.47 & 31.36 & 30.78 \\ 
Ijcnn1 & opt & \num{20000} & 0.0319 & 0.0017 & 1.91 & 1.94 & 41.51 & 41.03 \\ 
\midrule
Pose & full & $n$ & - & - & 3.59 & 3.59 & 100 & 100 \\ 
Pose & unif & \num{3000} & 0.0001 & 0.0007 & 0.30 & 0.30 & 8.49 & 8.59 \\ 
Pose & unif & \num{4000} & 0.0001 & 0.0008 & 0.41 & 0.41 & 11.35 & 11.45 \\ 
Pose & unif & \num{5000} & 0.0001 & 0.0009 & 0.51 & 0.51 & 14.15 & 14.31 \\ 
Pose & unif & \num{6000} & 0.0001 & 0.0010 & 0.62 & 0.62 & 17.19 & 17.17 \\ 
Pose & unif & \num{7000} & 0.0001 & 0.0011 & 0.72 & 0.72 & 20.04 & 20.04 \\ 
Pose & unif & \num{8000} & 0.0001 & 0.0017 & 0.81 & 0.81 & 22.64 & 22.90 \\ 
Pose & core & \num{3000} & 0.0029 & 0.0007 & 0.31 & 0.31 & 8.65 & 8.59 \\ 
Pose & core & \num{4000} & 0.0026 & 0.0008 & 0.40 & 0.41 & 11.37 & 11.45 \\ 
Pose & core & \num{5000} & 0.0026 & 0.0009 & 0.50 & 0.51 & 14.15 & 14.31 \\ 
Pose & core & \num{6000} & 0.0025 & 0.0010 & 0.61 & 0.61 & 17.02 & 17.17 \\ 
Pose & core & \num{7000} & 0.0026 & 0.0011 & 0.72 & 0.73 & 20.24 & 20.04 \\ 
Pose & core & \num{8000} & 0.0028 & 0.0011 & 0.81 & 0.81 & 22.68 & 22.90 \\ 
Pose & opt & \num{3000} & 0.0257 & 0.0008 & 0.31 & 0.33 & 9.30 & 8.59 \\ 
Pose & opt & \num{4000} & 0.0255 & 0.0008 & 0.41 & 0.44 & 12.24 & 11.45 \\ 
Pose & opt & \num{5000} & 0.0253 & 0.0009 & 0.50 & 0.53 & 14.79 & 14.31 \\ 
Pose & opt & \num{6000} & 0.0252 & 0.0010 & 0.60 & 0.63 & 17.46 & 17.17 \\ 
Pose & opt & \num{7000} & 0.0247 & 0.0011 & 0.71 & 0.74 & 20.59 & 20.04 \\ 
Pose & opt & \num{8000} & 0.0245 & 0.0011 & 0.80 & 0.83 & 23.05 & 22.90 \\ 
\midrule
MiniBooNE & full & $n$ & - & - & 13.16 & 13.16 & 100 & 100 \\ 
MiniBooNE & unif & \num{4000} & 0.0004 & 0.0019 & 0.41 & 0.41 & 3.12 & 3.15 \\ 
MiniBooNE & unif & \num{5000} & 0.0004 & 0.0021 & 0.51 & 0.51 & 3.91 & 3.94 \\ 
MiniBooNE & unif & \num{6000} & 0.0004 & 0.0022 & 0.61 & 0.61 & 4.66 & 4.73 \\ 
MiniBooNE & unif & \num{7000} & 0.0004 & 0.0025 & 0.72 & 0.72 & 5.48 & 5.52 \\ 
MiniBooNE & core & \num{4000} & 0.0122 & 0.0020 & 0.41 & 0.42 & 3.19 & 3.15 \\ 
MiniBooNE & core & \num{5000} & 0.0127 & 0.0022 & 0.52 & 0.54 & 4.08 & 3.94 \\ 
MiniBooNE & core & \num{6000} & 0.0128 & 0.0022 & 0.62 & 0.64 & 4.85 & 4.73 \\ 
MiniBooNE & core & \num{7000} & 0.0129 & 0.0030 & 0.73 & 0.75 & 5.69 & 5.52 \\ 
MiniBooNE & opt & \num{4000} & 0.1000 & 0.0019 & 0.41 & 0.51 & 3.88 & 3.15 \\ 
MiniBooNE & opt & \num{5000} & 0.1019 & 0.0021 & 0.52 & 0.62 & 4.73 & 3.94 \\ 
MiniBooNE & opt & \num{6000} & 0.1000 & 0.0023 & 0.61 & 0.71 & 5.39 & 4.73 \\ 
MiniBooNE & opt & \num{7000} & 0.0976 & 0.0025 & 0.73 & 0.83 & 6.33 & 5.52 \\ 
\midrule
FMA & full & $n$ & - & - & 26.65 & 26.65 & 100 & 100 \\ 
FMA & unif & \num{5000} & 0.0002 & 0.0040 & 1.27 & 1.28 & 4.79 & 4.81 \\ 
FMA & unif & \num{10000} & 0.0002 & 0.0068 & 2.56 & 2.57 & 9.63 & 9.62 \\ 
FMA & unif & \num{15000} & 0.0002 & 0.0145 & 3.86 & 3.88 & 14.55 & 14.44 \\ 
FMA & core & \num{5000} & 0.0798 & 0.0040 & 1.28 & 1.36 & 5.10 & 4.81 \\ 
FMA & core & \num{10000} & 0.0798 & 0.0112 & 2.66 & 2.75 & 10.31 & 9.62 \\ 
FMA & core & \num{15000} & 0.0859 & 0.0133 & 3.86 & 3.96 & 14.87 & 14.44 \\ 
FMA & opt & \num{5000} & 0.0785 & 0.0041 & 1.27 & 1.35 & 5.06 & 4.81 \\ 
FMA & opt & \num{10000} & 0.0760 & 0.0069 & 2.53 & 2.61 & 9.80 & 9.62 \\ 
FMA & opt & \num{15000} & 0.0754 & 0.0124 & 3.88 & 3.97 & 14.89 & 14.44 \\ 
\bottomrule
\end{tabular}
} 
\end{table}

\clearpage

\section{Proofs}\label{app:sec:proofs}
This section contains the proofs of all our theoretical results.
For completeness, we do not only restate the theorems and propositions, but also the used algorithms, assumptions, and problems.

\subsection{Theorem \ref{thm:amplification}}

\AmplificationByImportanceSampling*

\begin{proof}
    The proof is analogous to the proof of the regular privacy amplification theorem with $\delta=0$, see, \eg,\ \citet[Theorem 29]{steinke2022composition}.
    Let $\Ac$ be any measurable subset from the probability space of $\M(\Dc)$.
    We begin by defining the functions $P(\Zc) = \Pr\left[\widehat{\M}(\Dc) \in \Ac \mid \Samp_q(\Dc) = \Zc\right]$ and $P'(\Zc) = \Pr\left[\widehat{\M}(\Dc') \in \Ac \mid \Samp_q(\Dc') = \Zc\right]$.
    Let $\Dc = \Dc' \cup \{\x\}$ for some $\x \in \Xc$.
    Note that
    \begin{align}
    \begin{split}
        \Pr\left[\widehat{\M}(\Dc) \in \Ac\right]
        &= \E\left[ P(S_q(\Dc)) \right] \\ \label{eq:conditional-expectations}
        &= q(\x) \E\left[ P(S_q(\Dc)) \mid \x \in S_q(\Dc) \right] + (1 - q(\x)) \E\left[ P(S_q(\Dc)) \mid \x \notin S_q(\Dc) \right].
    \end{split}
    \end{align}
    We can analyze the events $\x \in S_q(\Dc)$ and $\x \notin S_q(\Dc)$ separately as follows.
    Conditioned on the event $\x \notin S_q(\Dc)$, the distributions of $S_q(\Dc)$ and $S_q(\Dc')$ are identical since all selection events $\gamma_i$ are independent:
    \begin{align} \label{eq:negative-case}
        \E\left[ P(S_q(\Dc)) \mid \x \notin S_q(\Dc) \right] = \E\left[ P(S_q(\Dc')) \mid \x \notin S_q(\Dc) \right] = \E\left[ P'(S_q(\Dc')) \right].
    \end{align}
    On the other hand, conditioned on the event $\x \in S_q(\Dc)$, the sets $S_q(\Dc)$ and $S_q(\Dc) \setminus \{\x\}$ are neighboring, and the distributions of $S_q(\Dc) \setminus \{\x\}$ and $S_q(\Dc')$ are identical.
    We can use the PDP profile $\epsilon$ to bound
    \begin{align}
    \begin{split}
        \E\left[ P(S_q(\Dc)) \mid \x \in S_q(\Dc) \right]
        &\leq \E\left[ e^{\epsilon(1/q(\x), \x)} P'(S_q(\Dc')) \mid \x \in S_q(\Dc) \right] \\  \label{eq:positive-case}
        &= e^{\epsilon(1/q(\x), \x)} \E\left[ P'(S_q(\Dc')) \right].
    \end{split}
    \end{align}
    Plugging Equations~\eqref{eq:negative-case} and \eqref{eq:positive-case} into Equation~\eqref{eq:conditional-expectations} yields
    \begin{align*}
        \Pr\left[\widehat{\M}(\Dc) \in \Ac\right]
        &\leq (1 - q(\x)) \E\left[ P'(S_q(\Dc')) \right] + q(\x) e^{\epsilon(1/q(\x), \x)} \E\left[ P'(S_q(\Dc')) \right] \\
        &= \left(1 + q(\x) \left( e^{1/q(\x)} -1 \right) \right) \E\left[ P'(S_q(\Dc')) \right] \\
        &= \left(1 + q(\x) \left( e^{1/q(\x)} -1 \right) \right) \Pr\left[\widehat{\M}(\Dc') \in \Ac\right].
    \end{align*}
    An identical argument shows that
    \begin{align*}
        \Pr\left[\widehat{\M}(\Dc) \in \Ac\right]
        &\geq \frac{1}{1 + q(\x) \left( e^{1/q(\x)} -1 \right)} \Pr\left[\widehat{\M}(\Dc') \in \Ac\right].
    \end{align*}
\end{proof}

\subsection{Theorem \ref{thm:amplification-privacy}}

\PrivacyOptimalSampling*

\AssumptionOne*

\AssumptionTwo*

\AssumptionThree*

\PrivacyOptimalImportanceWeights*

\AmplificationPrivacy*

\begin{proof}
    Since the objective is additive over $w_i$ and each constraint only affects one $w_i$, we can consider each $w_i$ separately.
    An equivalent formulation of the problem is
    \begin{align*}
        \maximize_{w_i} \quad   &w_i \\
        \textnormal{subject to} \quad & \log\left( 1 + \frac{1}{w_i} \left( e^{\epsilon(w_i, \x_i)} - 1 \right) \right) \leq \epsilon^\star \\
        & w_i \geq 1.
    \end{align*}
    Assumption~\ref{ass:epsilon-star} guarantees that the feasible region is non-empty and Assumption~\ref{ass:asymptotic} guarantees that it is bounded.
    Since the objective is strictly monotonic, the solution must be unique.
\end{proof}

\subsection{Proposition \ref{pro:amplification-numerical}}

\AmplificationNumerical*

Before beginning the proof, we state the definition of strong convexity for completeness.
\begin{definition}[Strong convexity]
    Let $\mu>0$.
    A differentiable function $f: \R^d \to \R$ is $\mu$-strongly convex if
    \begin{align} \label{eq:strong-convexity}
        f(v) \geq f(u) + \nabla f(u)^\transpose (v - u) + \frac{\mu}{2} \| v - u \|_2^2 \qquad \text{for all $u, v \in \R^d$}.
    \end{align}
\end{definition}

\begin{proof}[Proof of Proposition~\ref{pro:amplification-numerical}]
    First, we show that Assumption~\ref{ass:strongly-convex} implies Assumption~\ref{ass:asymptotic}.
    That is, we want to find a value $v_i$ for each $i$ such that all $w_i \geq v_i$ are infeasible.
    For some fixed $i$, define $g(w) = \epsilon(w, \x_i)$ and $g'(w) = \frac{\mathrm{d}}{\mathrm{d}w} g(w)$.
    We apply the strong convexity condition from Equation~\eqref{eq:strong-convexity} to $\exp \circ\, g$ at $u=1$ and $v=w$:
    \begin{align*}
        e^{g(w)} &\geq e^{g(1)} + g'(1) e^{g(1)} (w - 1) + \frac{\mu}{2} \| w - 1 \|_2^2 \\
        \frac{1}{w}\left( e^{g(w)} - 1 \right) &\geq \frac{\mu}{2} w + \left( e^{g(1)} - g'(1) e^{g(1)} + \frac{\mu}{2} - 1 \right) \frac{1}{w} + g'(1) e^{g(1)} - \mu \\
        \frac{1}{w}\left( e^{g(w)} - 1 \right) &\geq b_1 w^1 + b_0 w^0 + b_{-1} w^{-1}
    \end{align*}
    for appropriately defined constants $b_1$, $b_0$, and $b_{-1}$.
    We need to distinguish two cases, based on the sign of $b_{-1}$.

    \myparagraph{Case 1: $b_{-1} < 0$.} In this case, we have $b_{-1} w^{-1} \geq b_{-1}$ for all $w \geq 1$.
    A sufficient condition for infeasibility is
    \begin{align*}
        b_1 w + b_0 + b_{-1} &\geq e^{\epsilon^\star} - 1 \\
        w &\geq \frac{e^{\epsilon^\star} - 1 - b_0 - b_{-1}}{b_1} \\
        w &\geq \frac{e^{\epsilon^\star} - e^{g(1)} + \mu/2}{\mu/2} \\
        w &\geq 2 \left( \frac{e^{\epsilon^\star} - e^{g(1)} - \mu/2}{\mu} + 1 \right).
    \end{align*}

    \myparagraph{Case 2: $b_{-1} \geq 0$.} In this case, we have $b_{-1} w^{-1} \geq 0$.
    Analogously to Case 1, the condition for infeasibility is
    \begin{align*}
        b_1 w + b_0 &\geq e^{\epsilon^\star} - 1 \\
        w &\geq \frac{e^{\epsilon^\star} - 1 - b_0}{b_1} \\
        w &\geq \frac{e^{\epsilon^\star} - g'(1) e^{g(1)} + \mu - 1}{\mu/2} \\
        w &\geq 2 \left( \frac{e^{\epsilon^\star} - g'(1) e^{g(1)} - 1}{\mu} + 1 \right).
    \end{align*}
    We can summarize the two cases by defining
    \begin{align*}
        v_i = 2 \left( \frac{e^{\epsilon^\star} - \bar{v}_i}{\mu} + 1 \right), \quad \text{where} \quad
        \bar{v}_i = \min \left\{ e^{g(1)} + \frac{\mu}{2},\ g'(1) e^{g(1)} + 1  \right\}.
    \end{align*}
    Then, $v_i$ is the desired constant for Assumption~\ref{ass:asymptotic}.

    Having established that the optimal solution $w^\star_i$ is in the interval $[1, v_i]$, it remains to show that it can be found via bisection search.
    Bisection finds a solution to
    \begin{align*}
        \log \left( 1 + \frac{1}{w} \left( e^{g(w)} - 1 \right) \right) = \epsilon^\star,
    \end{align*}
    or, equivalently,        
    \begin{align} \label{eq:bisection-solution}
        e^{g(w)} = w \left( e^{\epsilon^\star} - 1 \right) + 1.
    \end{align}
    Since the left hand-side of Equation~\eqref{eq:bisection-solution} is strongly convex and the right hand-side is linear, there can be at most two solutions to the equality.
    We distinguish two cases.

    \myparagraph{Case 1: $g(1) = \epsilon^\star$ and $g'(1) < 1 - e^{-g(1)}$} In this case, there is a neighborhood around $w = 1$ in which we have $e^{g(w)} < w \left( e^{\epsilon^\star} - 1 \right) + 1$.
    This implies that there are two solutions to Equation~\eqref{eq:bisection-solution}, one at $w=1$ and one in $w \in (1, v_i]$.
    The latter is the desired solution.

    \myparagraph{Case 2: $g(1) < \epsilon^\star$ or $g'(1) \geq 1 - e^{-g(1)}$} In this case, either condition guarantees that there is exactly one solution to Equation~\eqref{eq:bisection-solution} in $w \in [1, v_i]$.
    With the first condition, it follows from the convexity of $\exp \circ\, g$.
    If the first condition is not met but the second one is, then we have $e^{g(w)} \geq w \left( e^{\epsilon^\star} - 1 \right) + 1$ for all $w \geq 1$.
    Then, the uniqueness follows from strong convexity.

    The two cases are implemented by the if-condition in Algorithm~\ref{alg:privacy-optimal-weights}.
\end{proof}

\begin{algorithm}[h!]
    \caption{Weighted DP Lloyd's algorithm}
    \label{alg:weighted-dp-lloyd}
\begin{algorithmic}
  \State {\bfseries Input:} Weighted data set $\Sc = \{(w_i, \x_i)\}_{i=1}^n$, initial cluster centers $\Cc$, number of iterations $T$, noise scales $\betas, \betac$
  \State {\bfseries Output:} Cluster centers $\cb_1, \ldots, \cb_k$
  \For{$t = 1, \ldots, T$}
    \State Compute cluster assignments $\Cc_1, \ldots, \Cc_k$, where $\Cc_j = \{ \x_i \mid j = \argmin_j \| \x_i - \cb_j \|^2 \}$
    \For{$j = 1,\ldots,k$}
        \State Sample $\xi_j \sim \operatorname{Lap}(0,\betac)$
        \State Sample $\zeta_j$ from the density proportional to $\exp\left( -\|\zeta_j \|_p  / \betas \right)$
        \State Update $\cb_j = \frac{1}{\xi_j+\sum_{(w,\x)\in\Cc_j} w} \left( \zeta_j + \sum_{(w,\x) \in \Cc_j} w\x \right)$
    \EndFor
  \EndFor
\end{algorithmic}
\end{algorithm}

\subsection{Proposition \ref{pro:dp-lloyd-distinguishability-profile}}

Algorithm~\ref{alg:weighted-dp-lloyd} summarizes the weighted version of differentially private Lloyd's algorithm as introduced in Section~4.

\DpLloydDistinguishabilityProfile*

\begin{proof}
    The weighted DP-Lloyd algorithm consists of $T$ weighted sum mechanisms and $T$ weighted count mechanisms.
    The weighted sum mechanism is an exponential mechanism while the weighted count mechanism uses Laplace noise.
    We first derive the PDP profile for a general exponential mechanism and then reduce the two special cases to the general case.

    Let $\Dc = \{(w_i, \x_i)\}_{i=1}^n \cup \{(w_0, \x_0)\}$ and $\Dc' = \{(w_i, \x_i)\}_{i=1}^n$ denote neighboring data sets.
    Let $\M(\Dc)$ be a mechanism taking values in $\R^d$, distributed according to the probability density function $f(y) \propto \exp(-k\, d(y, g(D)))$ where $k > 0$, $d(\cdot, \cdot)$ is a metric on $\R^d$ and $g: [1, \infty) \times \Xc \to \Yc$ is a function where $\Yc \subseteq \R^d$.
    In order to bound the privacy loss of $\M$, we require that the influence on  $g$ of any single weighted point $(w_0, \x_0)$ be bounded.
    Specifically, we require that there exist a function $\Delta: \Xc \to \R_{\geq 0}$ such that
    \begin{align}
        \label{eq:dp-lloyd-distinguishability-profile}
        d\left(g(\Dc), g(\Dc') \right) \leq w_0 \Delta(\x_0) \quad \text{for all neighboring $\Dc, \Dc'$ differing in $(w_0, \x_0)$}.
    \end{align}
    Given these prerequisites, we can see that the PDP profile of $\M$ is bounded as
    \begin{align*}
        \log \left| \frac{f(y)}{f'(y)} \right|
        &= \left| -k\, \left( d(y, g(\Dc)) - d(y, g(\Dc')) \right)
        + \log \frac{\int_{\R^d} \exp(-k\, d(y, g(\Dc)))\, \mathrm{d}y}{\int_{\R^d} \exp(-k\, d(y, g(\Dc')))\, \mathrm{d}y} \right| \\
        &\leq k\, \left| d(y, g(\Dc)) - d(y, g(\Dc')) \right| \\
        &\leq k\, d(g(\Dc), g(\Dc')) \\
        &\leq k\, w_0 \Delta(\x_0).
    \end{align*}
    The first inequality follows because the two integrals are identical due to translation invariance (note that the integrals are over all of $\R^d$).
    The second inequality follows from the triangle inequality for the metric $d$.
    The third inequality follows from \Cref{eq:dp-lloyd-distinguishability-profile}.

    Having established the general case, we apply it to the two submechanisms that constitute the weighted DP-Lloyd algorithm.
    \begin{itemize}
        \item The weighted sum mechanism is an exponential mechanism for $g(D) = \sum_{(w, \x) \in D} w\, \x$ and $d(y, y') = \| y - y' \|_p$ for some $p > 0$ and $k = 1/\betas$.
        The sensitivity function is $\Delta(\x) = \|\x\|_p$ because
        \begin{align*}
            d\left( g(D), g(D') \right) = \left\| w_0\, \x_0 \right\|_p
            = w_0\, \|\x_0\|_p.
        \end{align*}
        Thus, $\left| \log f(y) / f'(y) \right| \leq w_0 \|\x_0\|_p / \betas$.
        \item The weighted count mechanism is an exponential mechanism for $g(D) = \sum_{(w, \x) \in D} w$ and $d(y, y') = |y - y'|$ and $k = 1/\betac$.
        The sensitivity function is $\Delta(\x) = 1$ because
        \begin{align*}
            d\left( g(D), g(D') \right) = \left| w_0 \right|
            = w_0.
        \end{align*}
        Thus, $\left| \log f(y) / f'(y) \right| \leq w_0 / \betac$.
    \end{itemize}
    The result now follows by adaptive composition over the $T$ iterations of the algorithm.
\end{proof}

\end{document}